\newcommand{\dbt}[1]{\ifcase#1{}\or.5\or1\or1.5\or2\or2.5\or3\or3.5\or4\or4.5\fi}
\renewcommand{\-}[1]{\mskip-\dbt#1mu}
\renewcommand\![1]{{\bm{{#1}}}}
\def\ClAuDiAeatblanks{\global\@ignoretrue}
\def\ClAuDiAstop{\relax}
\def\ClAuDiAsplit #1#2\stop{\string #1}
\def\[#1\]{\protect{\if\ClAuDiAsplit #1\stop[%
{\ClAuDiAwithlabel #1\ClAuDiAstop}\else%
{\ClAuDiAwithoutlabel #1\ClAuDiAstop}\fi}\ignorespaces}%
\def\ClAuDiAwithlabel[#1]#2\ClAuDiAstop{\protect{\begin{equation}\label{#1}\begin{split}#2\end{split}\end{equation}}}
\def\ClAuDiAwithoutlabel#1\ClAuDiAstop{\protect{\begin{equation*}\begin{split}#1\end{split}\end{equation*}}}
\newcommand{\set}[2]{\bigl\{\mskip1mu #1:#2\mskip1mu\bigr\}}
\newcommand{\bigset}[2]{\bigl\{\mskip1mu #1:#2\mskip1mu\bigr\}}
\newcommand{\sset}[1]{\bigl\{\mskip1mu#1\mskip1mu\bigr\}}
\newcommand{\defemph}[1]{{\sl #1}}
\let\temp=\colon\def\colon{\temp\mathopen{}}
\def\kD{{\mathcal D}}
\def\kG{{\mathcal G}}
\def\BB{{\mathbb B}}
\def\DD{{\mathbb D}}
\def\EE{{\mathbb E}}
\def\FF{{\mathbb F}}
\def\GG{{\mathbb G}}
\def\KK{{\mathbb K}}
\def\RR{{\mathbb R}}
\def\E{{\mathrm E}}
\def\h{{\mathrm h}}
\newcommand{\res}{\operatorname{res}}
\newcommand{\ann}{\operatorname{ann}}
\newcommand{\cl}{\operatorname{cl}}
\newcommand{\St}{\operatorname{St}}
\newcommand{\GL}{\operatorname{GL}}
\newcommand{\vrt}{\operatorname{vert}}
\newcommand{\hor}{\operatorname{hor}}
\numberwithin{equation}{section}
\theoremstyle{plain}
\newtheorem{theorem}{Theorem}[section]
\newtheorem{proposition}[theorem]{Proposition}
\newtheorem{lemma}[theorem]{Lemma}
\theoremstyle{definition}
\newtheorem{definition}[theorem]{Definition}
\theoremstyle{remark}
\newtheorem{remark}[theorem]{Remark}
\newcommand{\deltaf}{\delta\-3f}
\begin{document}

\title{Skew critical problems}

\author{
Charles Cuell and George W.\ Patrick\\[.05in]
\small Applied Mathematics and Mathematical Physics\\[-.05in]
\small Department of Mathematics and Statistics\\[-.05in]
\small University of Saskatchewan\\[-.05in]
\small Saskatoon, Saskatchewan, S7N~5E6, Canada
}

\date{\small March 2007$^\dag$}

\maketitle

\renewcommand{\thefootnote}{}
\footnotetext{$^\dag\backslash\mbox{today}$: \today}

\vspace*{-.3in}\begin{abstract} Skew critical problems occur in
continuous and discrete nonholonomic Lagrangian systems. They are
analogues of constrained optimization problems, where the objective is
differentiated in directions given by an apriori distribution, instead
of tangent directions to the constraint. We show semiglobal existence
and uniqueness for nondegenerate skew critical problems, and
show that the solutions of two skew critical problems have the
same contact as the problems themselves. Also, we develop some
infrastructure that is necessary to compute with contact order
geometrically, directly on manifolds.
\end{abstract}

%
\section{Introduction}
%
%
Let $M$ and $N$ be manifolds, suppose $f\colon M\to\RR$ is $C^1$, and
let $g\colon M\to N$ be a $C^1$ submersion. Given this data,
$m_c\in M$~is a \defemph{critical point} at~$n\in N$ if
\[[eq:ordinary_critical_problem]
\left\{\begin{array}{l}
  \mbox{$\displaystyle df(m_c)(v)=0$ for all $\displaystyle v$ 
    such that $\displaystyle T_{m_c}g(v)=0$,}\\[2pt]
  \displaystyle g(m_c)=n.
\end{array}\right.
\]
This is the standard constrained optimization problem that seeks
critical points of the objective~$f$ subject to the constraint~$g$.

Appearing in~\eqref{eq:ordinary_critical_problem} are the derivative
of the objective~$df$, the constraint function~$g$, and $\ker Tg$,
which is a distribution on $M$. Generalizing, we consider the data
$(\alpha,\kD,g)$, where $\alpha$ is a one-form on~$M$, $\kD$ is a
distribution on~$M$, and $g\colon M\to N$ is as above. We replace the
first condition of~\eqref{eq:ordinary_critical_problem} with the
condition that $\alpha$ annihilates~$\kD$, and we call the result a
\defemph{skew critical problem}. Skew critical problems occur when an
objective function is not differentiated in tangent directions to a
constraint, but rather is differentiated in the directions specified
by an apriori given distribution.  We are interested in skew critical
problems because, for \emph{nonholonomic mechanics, the relevant
variational principle is skew}~\cite{PatrickGW-2006-1}, and this is
also true of the variational discrete analogues of nonholonomic
systems.

For mechanics we are interested in existence and uniqueness of skew
critical problems, by direct perturbation from the point of zero-time
change. We have a global solution of the (trivial) zero-time problem,
and we are interested in \defemph{semiglobal} results, which means
global along the unperturbed problem, but local transverse to that.
For discrete nonholonomic systems, we are also interested to know that
the solutions of two skew critical problems have the same contact as
the data of the two problems. The skew critical problems of mechanics
require desingularization at zero-time, essentially by dividing by
time. This degrades the order matching, which is again recovered by a
zero-time symmetry of the desingularized problem, and so we must
consider the presence of symmetry.  We are interested in applications
to both the continuous and discrete mechanics, so we work in an
appropriate context of infinite dimensional manifolds.

In this work, we collect some technical results related to skew
critical problems. For such a problem~$(\alpha,\kD,g)$, little can be
inferred just from the equations $\alpha(m)|\kD=0$, $g(m)=n$, without
some control imposed on~$\alpha$,~$\kD$, and~$g$, so we begin in
Section~\ref{section:skew-critical-problems} with the definition of a
\defemph{nondegenerate} skew critical point. This corresponds to
infinitesimal conditions that, using the implicit function theorem,
imply there is locally a unique skew critical point for every nearby
constraint value (Lemma~\ref{lem:skew-critical-point-local}). If~$N$
is paracompact, then a manifold of nondegenerate skew critical points
along a submanifold~$N_0\subseteq N$ can be extended along the whole
of~$N_0$. We call this result \defemph{semiglobal} because it
establishes an extension over the whole of~$N_0$, rather that just at
one point of~$N$.

Contact of solutions of skew critical problems is important for
discretizations of constrained Lagrangian systems, because contact
with the exact system determines the order of the corresponding
numerical methods. Section~\ref{section:order} establishes the basic
definitions and results about contact. Generally, it often happens
that cancellations result in one higher contact that would normally be
expected from data or computation. For example, any Taylor expansion
to odd order of an even function, is actually the expansion to the
next higher order; a less trivial example is the fact that any odd
order self-adjoint one step numerical method is one higher (even)
order~\cite{HairerE-LubichC-WannerG-2002-1}. It is best to understand
the cancellations geometrically. This kind of ``passage to the next
order'' occurs when a geometric object that we call the
\defemph{residual} vanishes.  In Section~\ref{section:order} we find
that it is useful to consider the vector bundle analogue of blowing up
near the zero of a function of a single variable i.e.\ the function
$f(t)/t$ where $f(0)=0$. The completion of the function is made with
the help of the vertical bundle at the zero section, and the contact
drops by one. We provide, for computing on manifolds,
Equation~\eqref{eq:g_circ_f_res}, which computes the residuals of the
composition of two maps in terms of the residuals of the maps
themselves. For skew critical problems, it is necessary to consider
the contact order of \emph{distributions}, which are subsets rather
that maps. This is naturally done using Grassmann manifolds: a
distribution can be regarded as an assignment of subspaces to base
points.

Finally, in Section~\ref{section:equations} we consider contact for
inverse functions and the problems of construction maps from
graphs. For graphs, an exchange symmetry of the residuals implies that
the contact increases by one. In Section~\ref{section:skew-problems}
we consider contact for skew critical problems. In the presence of the
action of a Lie group, we obtain equivariance of the residuals of the
skew critical points given equivariance of the residuals of the skew
critical problems.

The notations in this work follow those
of~\cite{AbrahamR-MarsdenJE-RatiuTS-1988-1}. We assume without mention
that the manifolds and submanifolds we use are sufficiently
differentiable to support whatever operations are involved.

%
\section{Regular skew critical problems}
\label{section:skew-critical-problems}
%
%
Let $M$ and $N$ be Banach manifolds, $\alpha$ be a $C^k$ one-form on
$M$, $\kD$ be a $C^k$ distribution on $M$, and let $g\colon M\to N$ be
a $C^k$ submersion i.e.\ $Tg$ is surjective with split kernel. We call
$(\alpha,\kD,g)$ a \defemph{$C^k$ skew critical problem}.

\begin{definition}
A point $m_c\in M$ is a \defemph{skew critical point of $(\alpha,\kD,g)$}
at $n\in N$ if\[[eq:skew_critical_problem]
\left\{\begin{array}{l}
  \mbox{$\displaystyle \alpha(m_c)(v)=0$ for all 
    $\displaystyle v\in\kD_{m_c}$,}\\[2pt]
  \displaystyle g(m_c)=n.
\end{array}\right.
\]
\end{definition}

A critical point $m_c$ of a constrained optimization problem with
$n\equiv g(m_c)$ is called \defemph{nondegenerate} if the Hessian of
$f|g^{-1}(n)$ is nonsingular.  The corresponding notion for skew
critical problems is given below in Definitions~\ref{def:skew-Hessian}
and~\ref{def:skew-regular}.

\begin{definition}\label{def:skew-Hessian}
Let $m_c$ be a skew critical point of $(\alpha,\kD,g)$.  Define the
bilinear form $d_\kD\alpha(m_c)\colon T_{m_c}M\times\kD_{m_c}\to\RR$
by
\[
d_\kD\alpha(m_c)(u,v)\equiv\bigl\langle d(i_V\alpha)(m_c),u\bigr\rangle,
\]
where $V$ is a (local) vector field with values in $\kD$ such that
$V(m_c)=v$.  The \defemph{skew Hessian} of $\alpha$ with respect to
$g$ and $\kD$ is the bilinear form
\[
d_{\kD,g}\alpha(m_c)\colon\ker T_{m_c}g\times\kD_{m_c}\to\RR
\] 
obtained by restriction of $d_\kD\alpha(m_c)$. Define
$d_{\kD,g}\alpha(m_c)^\flat\colon\ker T_{m_c}g\to\kD_{m_c}^*$ by
\[
d_{\kD,g}\alpha(m_c)^\flat(u)\equiv d_{\kD,g}\alpha(m_c)(u,\,\cdot\,).
\]
\end{definition}

\begin{remark}\label{rem:Hessian-ok}
The definition of $d_\kD\alpha(m_c)$ does not depend on the
extension~$V$:~in a vector bundle chart of $\kD$, the local setup
has
\[
TM=U\times(\DD\oplus\FF),\quad 
\kD= U\times\bigl(\DD\oplus\sset0\bigr),\quad
\alpha=\alpha_\DD\oplus\alpha_\FF,
\]
where $U\subseteq\DD\oplus\FF$ is open, $\alpha_\DD\colon
U\to\DD^*\cong\ann\FF$, and $\alpha_\FF\colon
U\to\FF^*\cong\ann\DD$. Supposing that $x_c\in U$ is a skew critical
point, two extensions $V_i\colon U\to\DD$, $i=1,2$, with
$V_1(x_c)=v=V_2(x_c)$ result in
$i_{V_1-V_2}\alpha=\langle\alpha_\DD,V_1-V_2\rangle$.  By the product
rule, the derivative of this at $x_c$ is zero since both $\alpha_\DD$
and $V_1-V_2$ vanish at $x_c$, so
$d(i_{V_1}\alpha)(x_c)=d(i_{V_2}\alpha)(x_c)$. In contrast to the
constrained critical problems, skew Hessians are not symmetric since
their arguments assume values in different vector subspaces.
\end{remark}

\begin{definition}\label{def:skew-regular}
A skew critical point $m_c$ of $(\alpha,\kD,g)$ is called
\defemph{nondegenerate} if $d_{\kD,g}\alpha(m_c)^\flat$ is a linear
isomorphism.
\end{definition}

In finite dimensions, the standard constrained optimization
problem~\eqref{eq:ordinary_critical_problem} has as many equations for
$m_c$ as there are unknowns, because $g$ simultaneously constrains
both~$v$ and~$m_c$. For the skew
problem~\eqref{eq:skew_critical_problem}, the number of equations need
not equal the number of unknowns, since $g$ and $\kD$ may be unrelated.
Definition~\ref{def:skew-regular} controls this, because if $m_c$ is
nondegenerate then the fiber dimensions of $\ker T_{m_c}g$ and
$\kD_{m_c}$ are equal since $\ker T_{m_c}g$ and $\kD_{m_c}^*$ are
isomorphic.

\begin{lemma}\label{lem:skew-critical-point-local}
Let $m_c$ be a nondegenerate skew critical point of a $C^k$ skew
critical problem $(\alpha,\kD,g)$, $k\ge 1$, and let $n_c\equiv
g(m_c)$.  Then there are neighborhoods $U\ni m_c$ and $V\ni n_c$ such
that, for every $n\in V$ there is a unique skew critical point~$m\in
U$ of $(\alpha,\kD,g)$ such that $g(m)=n$. Moreover, the map
$\gamma\colon V\to U$ so defined is $C^k$.
\end{lemma}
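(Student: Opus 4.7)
The plan is to apply the Banach inverse function theorem after recasting the skew critical equations as the zero set of a single $C^k$ map whose derivative at $m_c$ is invertible.

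First I would pass to a local vector bundle trivialization of $\kD$ at $m_c$ as in Remark~\ref{rem:Hessian-ok}: identify a neighborhood of $m_c$ with an open subset of $\DD\oplus\FF$, identify $\kD$ correspondingly with $\DD\oplus\sset 0$, and split $\alpha=\alpha_\DD\oplus\alpha_\FF$. The annihilation condition $\alpha(m)|\kD_m=0$ then reduces to $\alpha_\DD(m)=0$, so the skew critical points at~$n$ are precisely the solutions of
\[
F(m)\equiv\bigl(\alpha_\DD(m),g(m)\bigr)=(0,n),\qquad
F\colon M\to\DD^*\times N.
\]

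Next I would verify that $T_{m_c}F\colon T_{m_c}M\to\DD^*\times T_{n_c}N$ is a toplinear isomorphism. The submersion hypothesis supplies a continuous splitting $T_{m_c}M=\ker T_{m_c}g\oplus W$ with $T_{m_c}g$ restricting to a toplinear isomorphism $W\to T_{n_c}N$. Using constant sections $v\in\DD$ exactly as in Remark~\ref{rem:Hessian-ok}, one finds $d\alpha_\DD(m_c)\cdot u=d_\kD\alpha(m_c)^\flat(u)$ for every $u\in T_{m_c}M$, so the restriction of $d\alpha_\DD(m_c)$ to $\ker T_{m_c}g$ is precisely $d_{\kD,g}\alpha(m_c)^\flat$; by Definition~\ref{def:skew-regular} this is a toplinear isomorphism onto $\kD_{m_c}^*\cong\DD^*$. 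For any target $(\beta,\eta)\in\DD^*\times T_{n_c}N$, first solve $T_{m_c}g(w)=\eta$ for $w\in W$ and then $d\alpha_\DD(m_c)\cdot u=\beta-d\alpha_\DD(m_c)\cdot w$ for $u\in\ker T_{m_c}g$; this gives a unique preimage $u+w$, so $T_{m_c}F$ is invertible.

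The Banach inverse function theorem then produces open sets $U'\ni m_c$ and $O\ni(0,n_c)$ such that $F|_{U'}\colon U'\to O$ is a $C^k$ diffeomorphism. Shrink $V\ni n_c$ so that $\sset 0\times V\subseteq O$, set $U\equiv F^{-1}(\sset 0\times V)\cap U'$, and define $\gamma(n)\equiv F^{-1}(0,n)$. Uniqueness of the skew critical point in~$U$ with prescribed $g$-value~$n$ follows from injectivity of $F|_{U'}$, and $\gamma$ is $C^k$ as the composition of the smooth inclusion $n\mapsto(0,n)$ with $F^{-1}$.

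The step I expect to require the most care is the identification $d\alpha_\DD(m_c)|_{\ker T_{m_c}g}=d_{\kD,g}\alpha(m_c)^\flat$, which couples Definition~\ref{def:skew-regular} directly to invertibility of the linearization and is really the content of the argument. No topological subtlety arises in the Banach setting, since the submersion hypothesis and the vector bundle chart for $\kD$ both deliver closed, complemented subspaces, which is precisely what the inverse function theorem requires.
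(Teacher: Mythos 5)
Your proposal is correct and follows essentially the same route as the paper's proof: the same local map $F(m)=\bigl(\alpha_\DD(m),g(m)\bigr)$ in a vector bundle chart for $\kD$, the same splitting $T_{m_c}M=\ker T_{m_c}g\oplus W$ with the explicit two-step inversion of $DF(m_c)$ (solve the $g$-component in the complement, then use nondegeneracy on the kernel), and the same appeal to the inverse function theorem. The identification $D\alpha_\DD(m_c)|\ker T_{m_c}g=d_{\kD,g}\alpha(m_c)^\flat$ that you flag as the delicate step is exactly the point the paper relies on as well.
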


\begin{proof}
Using vector bundle charts as in Remark~\ref{rem:Hessian-ok}, the skew
critical points~$x$ such that $g(x)=y$ are obtained by solving
$F(x)=(0,y)$, where $F(x)\equiv\bigl(\alpha_\DD(x),g(x)\bigr)$. The
derivative of $F$ at a particular $x_c$ is
\[[eq:DF]
DF(x_c)u=\bigl(D\alpha_\DD(x_c)u,Dg(x_c)u\bigr).
\]
The first component is a linear isomorphism on $\ker Dg(x_c)$ since
$x_c$ is nondegenerate. Since $Dg(x_c)$ is onto with a split kernel, there
is a closed subspace $\KK$ such that $\DD\oplus\FF=\ker
Dg(x_c)\oplus\KK$, and $Dg(x_c)|\KK$ is a linear
isomorphism. From~\eqref{eq:DF}, $DF(x_c)u=(w_1,w_2)$ is continuously
inverted by
\[
&\tilde u=(Dg(x_c)|\KK)^{-1}w_2,\\
&u=\tilde u+\bigl(D\alpha_\DD(x_c)|\ker Dg(x_c)\bigr)^{-1}\bigl(w_1-D
  \alpha_\DD(x_c)\tilde u\bigr),
\]
and the result follows from the inverse function theorem.
\end{proof}

The following semiglobal inverse function theorem is found on page~97
of~\cite{LangS-1972-1}. The semiglobal result for skew critical points
which follows that, the proof of which is included for completeness,
pre-supposes nondegeneracy along a given smooth map of skew critical
points.

\begin{theorem}\label{thm:semiglobal-inverse-fn}
Let $M$ and $N$ be manifolds and $f \colon M \to N$ be
$C^k$, $k \ge 1$.  Suppose that
\begin{enumerate}
\item $M_0$ is a closed submanifold of $M$, $N_0$ is a closed
submanifold of $N$, and $f|M_0\colon M_0\to N_0$ is a diffeomorphism; and
\item $f$ is a local diffeomorphism at every $m\in M_0$.
\end{enumerate}
Then $f$ is a $C^k$ diffeomorphism from some open neighborhood $U
\supset M_0$ to some open neighborhood $V\supset N_0$.
\end{theorem}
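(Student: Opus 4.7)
The plan is to apply the inverse function theorem pointwise along $M_0$, paste the resulting local diffeomorphisms into a local diffeomorphism on an open neighborhood of $M_0$, and then shrink to recover global injectivity; this last shrinking is the main obstacle.

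First, for each $m\in M_0$, hypothesis~(2) and the ordinary inverse function theorem yield open neighborhoods $U_m\ni m$ and $V_m\ni f(m)$ with $f|U_m\colon U_m\to V_m$ a $C^k$ diffeomorphism. Using hypothesis~(1), I would then shrink $V_m$ so that $V_m\cap N_0=f(U_m\cap M_0)$; this is possible because $f|M_0\colon M_0\to N_0$ is already a diffeomorphism of closed submanifolds, so the preimage of a small neighborhood of $f(m)$ in $N_0$ is a small neighborhood of $m$ in $M_0$. Setting $U^*:=\bigcup_{m\in M_0}U_m$ and $V^*:=f(U^*)$, we obtain a $C^k$ local diffeomorphism $f|U^*\colon U^*\to V^*$ between open neighborhoods of $M_0$ and $N_0$.

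Next, consider the set $U:=\bigsset{x\in U^*}{f^{-1}(f(x))\cap U^*=\sset{x}}$ of points at which $f$ is globally injective inside $U^*$. The inclusion $M_0\subseteq U$ follows from the normalization above: if $m\in M_0$ and $y\in U^*$ satisfies $f(y)=f(m)\in N_0$, picking $U_{m'}\ni y$ and using $V_{m'}\cap N_0=f(U_{m'}\cap M_0)$ places $m$ itself in $U_{m'}$, and then $y=m$ by injectivity of $f|U_{m'}$.

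The hard part is showing that $U$ is open. Here I would invoke paracompactness to refine $\{U_m\}$ to a locally finite open cover of $M_0$, and shrink so that the closure of each refined member still lies in some original $U_m$. Then for any $x\in U$ contained in some $U_{m_0}$, a purported sequence $x_n\to x$ in $U^*\setminus U$ would supply partners $z_n\neq x_n$ in $U^*$ with $f(z_n)=f(x_n)$; injectivity of $f|U_{m_0}$ eventually expels the $z_n$ from $U_{m_0}$, while local finiteness confines them to a fixed finite union of refined closures, and extracting a limit point $z\neq x$ with $f(z)=f(x)$ contradicts $x\in U$. Once $U$ is known to be open, $f|U$ is a bijective $C^k$ local diffeomorphism onto the open set $V:=f(U)$, giving the required $C^k$ diffeomorphism.
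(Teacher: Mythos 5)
First, on provenance: the paper does not actually prove this statement; it cites page~97 of Lang's \emph{Differential Manifolds} and instead writes out the analogous argument in detail for Theorem~\ref{theorem:semiglobal-skew-critical-point}, which is the template your proof should be measured against. The first half of your argument is correct: the local inverses $f|U_m\colon U_m\to V_m$, the normalization $V_m\cap N_0=f(U_m\cap M_0)$ (achievable because $f(U_m\cap M_0)$ is open in $N_0$, hence equals $W\cap N_0$ for some open $W\subseteq N$ by which one cuts down $V_m$ and $U_m$), and the deduction $M_0\subseteq U$ all go through.

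The gap is in the openness of $U$. Local finiteness of a refined cover of $M_0$ \emph{in $M$} does not ``confine the $z_n$ to a fixed finite union of refined closures.'' The partners $z_n$ satisfy only $f(z_n)=f(x_n)\to f(x)$; nothing forces the $z_n$ to converge, to remain in a compact set, or to meet only finitely many members of the cover. The dangerous configuration---exactly the covering-space phenomenon this theorem must rule out---is $z_n\in U_{i_n}$ with $i_n\to\infty$, the sets $U_{i_n}$ marching off to infinity in $M$ while their images $V_{i_n}=f(U_{i_n})$ accumulate at $f(x)$; upstairs local finiteness is silent about this. Moreover, even if the $z_n$ were trapped in finitely many cover elements, those elements need not have compact closure (certainly not in the Banach setting of this paper), so the ``limit point $z$'' you extract need not exist. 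The repair, which is precisely what the paper's proof of Theorem~\ref{theorem:semiglobal-skew-critical-point} carries out, is to put the locally finite family \emph{downstairs}: use paracompactness of $N$ (note that neither your proposal nor the theorem statement supplies a paracompactness hypothesis, and you never say which space's cover you are refining) to make $\sset{V_i}$ locally finite in $N$, arrange that the local inverse of $f$ carries $\cl V_i$ into $U_i$, introduce the finite index sets $\St(n)=\set{i}{n\in\cl V_i}$ shrunk so that $\St(n')\subseteq\St(n)$ for $n'$ near $n$, and then \emph{define} $U\equiv\bigcup_n\bigl(f^{-1}(n)\cap\bigcap_{i\in\St(n)}U_i\bigr)$. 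With that definition any two $U$-preimages of the same $n$ lie in a common $U_i$, so injectivity is immediate, and openness of $U$ follows from the finiteness of $\St(n)$ rather than from a compactness or sequential argument. I recommend rewriting your second half along these lines.
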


\begin{theorem}\label{theorem:semiglobal-skew-critical-point} 
Let $(\alpha,\kD,g)$ be a $C^k$ skew critical problem, $k\ge1$, where
$g\colon M\to N$.  Suppose that $N$ is paracompact, and that
{\renewcommand{\theenumi}{\alph{enumi}}\begin{enumerate}
\item $M_0$ is a closed submanifold of $M$, $N_0$ is a closed manifold
of $N$ and $\gamma_0\colon N_0\to M_0$ is a $C^k$
diffeomorphism; and
\item for all $n \in N_0$, $\gamma_0(n)$ is a nondegenerate skew critical
point of  $(\alpha,\kD,g)$ at~$n$.
\end{enumerate}
Then there are open neighborhoods $U\supseteq M_0$ and $V\supseteq
N_0$ and a $C^k$ extension $\gamma\colon V\to U$ such that}
\begin{enumerate}
\item 
for all $n\in V$, $\gamma(n)$ is a skew critical point of
$(\alpha,\kD,g)$ at~$n$; and
\item
$\gamma(n)$ is the unique skew critical point of $(\alpha,\kD,g)$ in
$U$.
\end{enumerate}
\end{theorem}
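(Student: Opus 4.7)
My plan is to assemble an embedded $C^k$ submanifold $S \supseteq M_0$ whose points are exactly the skew critical points near $M_0$, and then apply the semiglobal inverse function theorem (Theorem \ref{thm:semiglobal-inverse-fn}) to $g|S$.

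First, for each $m \in M_0$, Lemma \ref{lem:skew-critical-point-local} yields open neighborhoods $U_m \ni m$, $V_m \ni g(m)$, and a $C^k$ section $\gamma_m \colon V_m \to U_m$ parameterizing the unique skew critical point of $(\alpha,\kD,g)$ in $U_m$ lying over each $n \in V_m$. Using continuity of $\gamma_0$, I shrink $V_m$ so that $\gamma_0(V_m \cap N_0) \subseteq U_m$; the uniqueness clause of the lemma then forces $\gamma_m = \gamma_0$ on $V_m \cap N_0$, so $U_m \cap M_0 \subseteq S_m := \gamma_m(V_m)$.

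Using paracompactness of $N$ (which passes to $N_0$, and hence to $M_0$ via $\gamma_0$), I extract a locally finite refinement $\{(U_i, V_i, \gamma_i)\}_{i \in I}$ of $\{(U_m, V_m, \gamma_m)\}$. Set $W := \bigcup_i U_i$ and $S_i := \gamma_i(V_i)$. On each overlap $U_i \cap U_j$, uniqueness in the lemma forces $S_i \cap U_j = S_j \cap U_i$, since both equal the set of skew critical points in $U_i \cap U_j$; hence the $S_i$ glue into a $C^k$ submanifold $S \subseteq W$ with $M_0 \subseteq S$. Because $g \circ \gamma_i = \id_{V_i}$, each $\gamma_i$ is a $C^k$ embedding with $C^k$ inverse $g|S_i$, so $S$ is embedded in $W$ (its submanifold topology agrees with the subspace topology), $g|S$ is a local diffeomorphism at every point of $M_0$, and $g|M_0 = \gamma_0^{-1}$ is the given diffeomorphism onto $N_0$. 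With $M_0$ closed in $S$ and $N_0$ closed in $N$, Theorem \ref{thm:semiglobal-inverse-fn} applies to $g|S$ and produces open neighborhoods $U' \supseteq M_0$ in $S$ and $V \supseteq N_0$ in $N$ with $g|U' \colon U' \to V$ a $C^k$ diffeomorphism.

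I then set $\gamma := (g|U')^{-1}$, write $U' = \tilde U \cap S$ for some $\tilde U$ open in $W$, and take $U := \tilde U$. Conclusion (i) is immediate because $\gamma(n) \in U' \subseteq S$. For (ii), any skew critical point $m \in U$ lies in some $U_i$, so by uniqueness in the lemma $m \in S_i \subseteq S$; then $m \in \tilde U \cap S = U'$, and injectivity of $g|U'$ gives $m = \gamma(g(m))$. The main subtlety is the gluing step: one must shrink the $V_m$'s carefully so that the local pieces assemble into a genuinely embedded submanifold, after which Theorem \ref{thm:semiglobal-inverse-fn} cleanly absorbs the remaining global injectivity.
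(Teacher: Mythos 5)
Your route is genuinely different from the paper's. The paper never forms a solution manifold: it keeps the local charts $(U_i,V_i,\gamma_i)$, uses paracompactness and local finiteness to define the finite index sets $\St(n)=\set{i}{n\in\cl V_i}$, and builds $U$ directly as $\bigcup_n\bigl(g^{-1}(n)\cap\bigcap_{i\in\St(n)}U_i\bigr)$, arranged so that any two points of $U$ with the same $g$-image lie in a common $U_i$; uniqueness then follows chart by chart, and $\gamma$ is assembled from the $\gamma_i$ by hand. You instead glue the local solution sets into an embedded submanifold $S$ and let Theorem~\ref{thm:semiglobal-inverse-fn}, applied to $g|S$, do the global work. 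That is a legitimate and arguably cleaner strategy.

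There is, however, a recurring gap, and it bites exactly where the paper's combinatorics are aimed. Lemma~\ref{lem:skew-critical-point-local} controls only those skew critical points of $U_i$ lying over $V_i$: $S_i=\gamma_i(V_i)$ is the set of skew critical points $m\in U_i$ with $g(m)\in V_i$, not the set of all skew critical points in $U_i$. Consequently (a) the overlap identity $S_i\cap U_j=S_j\cap U_i$ is false as stated --- the two sides carry the different conditions $g(m)\in V_i$ versus $g(m)\in V_j$ --- although the gluing survives because $S\cap\bigl(U_i\cap g^{-1}(V_i)\bigr)=S_i$, which is the local statement you actually need; and (b), more seriously, your uniqueness argument for (ii) breaks: from ``$m\in U$ is skew critical and $m\in U_i$'' you cannot conclude $m\in S_i$ unless $g(m)\in V_i$, and nothing in the choice $U=\tilde U$ arranges this. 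A skew critical point of $\tilde U$ lying over some $n\in V$ could sit only in charts $U_i$ with $g(m)\notin V_i$, miss $S$ entirely, and violate uniqueness. The repair is cheap --- shrink each $U_i$ to $U_i\cap g^{-1}(V_i)$ at the outset (the conclusion of the Lemma is preserved by this), or equivalently take $U=\tilde U\cap\bigcup_i\bigl(U_i\cap g^{-1}(V_i)\bigr)$, checking that this still contains $M_0$ and $\gamma(V)$ --- but as written the step is invalid, and it is precisely the difficulty the paper's $\St(n)$ construction exists to handle.
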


\begin{proof}
Applying Lemma~\ref{lem:skew-critical-point-local} at
all~$\gamma_0(n_0)$ as $n_0$ ranges through $N_0$, there are open
covers $U_i$ of $M_0$ and $V_i$ of $N_0$, and $C^k$~maps
$\gamma_i\colon V_i\to U_i$, such that, for all~$n\in V_i$,
$\gamma_i(n)$ is the unique skew critical point of~$(\alpha,\kD,g)$
in~$U_i$. By shrinking~$V_i$ one can arrange $\gamma_i(\cl
V_i)\subseteq U_i$ where $\gamma_i$ is defined on an open superset
of~$V_i$.  Because $N$ is paracompact, its open cover
$\sset{N\setminus N_0, V_i}$ admits a locally finite refinement, so
the collection $\sset{V_i}$ can be assumed locally finite.

By Lemma~20.4 of \cite{WillardS-1970-1}, the collection $\sset{\cl
V_i}$ is also locally finite, so each~$n\in\bigcup_iV_i$ admits a
neighborhood $V_n$ that meets only finitely~many~$\cl V_i$. For
each~$n\in\bigcup_iV_i$, the set of indices
\[
\St(n)\equiv\bigset{i}{n\in\cl V_i}
\]
is finite. No $\St (n)$ is empty because every $n\in\bigcup_i V_i$ is
contained in some $ V_i$ and hence is in some $\cl V_i$. The set
\[[20]
V_n\setminus
\bigcup\set{\cl V_i}{\mbox{$\cl V_i$ 
meets $V_n$ and $i\not\in \St (n)$}}
\]
an open neighborhood of $n$ because it subtracts from $V_n$ only
finite many closed sets, and it has the property that if any of its
members is in any $\cl V_i$ then $i\in \St (n)$. Replacing each $V_n$
with \eqref{20}, it can be assumed that $\St (n^\prime)\subseteq \St
(n)$ for all $n^\prime\in V_n$.

Defining
\[
U\equiv\bigcup_{n\in\bigcup_i V_i}\biggl(g^{-1}(n)\cap
\bigcap_{i\in \St (n)} U_i\biggr),
\qquad V\equiv\bigcup_i\gamma_i^{-1}(U),
\]
we can show the following facts.
\begin{enumerate}
\item 
$M_0\subseteq U$: if $m\in M_0$ and $n\equiv g(m)$ then $m\in g^{-1}(n)$ and
$n\in\cl V_i$ for all $i\in\St (n)$ so $\gamma_i(n)\in\gamma_i(\cl
V_i)\subseteq U_i$ for all $i\in\St (n)$, hence $m\in U$.
\item 
$U$ is an open neighborhood of $M_0$: if $m\in U$ and $n\equiv g(m)$ then
\[
m\in g^{-1}(n)\cap\bigcap_{i\in \St (n)} U_i
\subseteq g^{-1}(V_n)\cap\bigcap_{i\in \St (n)} U_i.
\]
The last set is open because it is the intersection of finitely many
open sets. Also,
\[
g^{-1}(V_n)\cap\bigcap_{i\in \St (n)} U_i
&=\bigcup_{n^\prime\in V_n}\biggl(g^{-1}(n^\prime)\cap
\bigcap_{i\in \St (n)} U_i\biggr)\\
&\subseteq\bigcup_{n^\prime\in V_n}\biggl(g^{-1}(n^\prime)\cap
\bigcap_{i\in \St (n^\prime)}U_i\biggr)\\
&\subseteq U.
\] 
Thus there is an open neighborhood of $m$ that is contained in $U$.
\item
$U$ has the property that, for all $m_1,m_2\in U$, $g(m_1)=g(m_2)$
implies that there is an $i$ such that $m_1$ and $m_2$ are both in
$U_i$. Indeed, any such $m_1$ and $m_2$ are members of
\[
g^{-1}(n)\cap\bigcap_{i\in\St (n)}U_i,
\]
where $n=g(m_1)=g(m_2)$, and so both $m_1$ and $m_2$ are members of
any $U_i$ for any $i\in\St (n)$.
\end{enumerate}

Let $n\in V$. Then $n\in\gamma_i^{-1}(U)$ for some $i$ and
$m=\gamma_i(n)$ is a skew critical point of $(\alpha,\kD,g)$ in
$U$. If $m^\prime\in U$ is another such skew critical point then
$g(m)=g(m^\prime)$, and $m$ and $m^\prime$ both lie in a single
$U_j$. By definition of the $U_j$ there is only one skew critical
point of $(\alpha,\kD,g)$ in $U_j$, so $m=m^\prime$.  Thus for all $n\in
V$ there is a unique skew critical point of $(\alpha,\kD,g)$ in
$U$. Define $\gamma\colon V\to U$ by this correspondence.  By the
uniqueness used to define $\gamma$, the restriction of $\gamma$ to any
$\gamma_i^{-1}(U)$ is $\gamma_i$, which is $C^k$, and the
$\gamma_i^{-1}(U)$ cover $V$, so $\gamma$ is $C^k$.
\end{proof}

%
\section{Order Notation and Residuals}
\label{section:order}
%
%

Given two functions~$f_i(x)$, $i=1,2$, of a single variable~$x\in\RR$,
the standard definition of $f_1(x)=f_2(x)+O(x^r)$ is that there are
numbers~$\delta>0$ and~$C>0$ such that $|f_1(x)-f_2(x)|\le C|x|^r$
for~$|x|<\delta$.  If the functions~$f_i$ are $C^r$, $r\ge1$, then
$f_1(x)=f_2(x)+O(x^r)$ if and only if there is a continuous function,
say $\deltaf(x)$, such that $f_1(x)=f_2(x)+x^r\deltaf(x)$.  The
following definitions export the second formulation to the context of
manifolds.

{\samepage \begin{definition}\mbox{}
\begin{enumerate}
\item 
Let~$M$ be a manifold and $h_M\colon M\to\mathbb{R}$ be a
$C^\infty$~function which has~$0$ as a regular value. The pair
$(M,h_M)$ will be called a \defemph{manifold}.
\item 
Let $(M,h_M)$ and $(N,h_N)$ be manifolds. A \defemph{$C^k$~mapping}
$f\colon(M,h_M)\to(N,h_N)$ is a $C^k$~mapping $f\colon M\to N$ such
that $h_N\circ f=h_M$.
\item
A \defemph{$C^k$~mapping} $f\colon(M,h_M)\to N$ or $f\colon M\to
(N,h_N)$ is a mapping $f\colon M\to N$ without any conditions
involving $h_M$ or $h_N$.
\end{enumerate}\end{definition}}

\begin{definition}\label{def:order}
Let~$(M,h_M)$ and~$N$ be manifolds and $f_i\colon (M,h_M)\to N$,
$i=1,2$, be such that $f_1=f_2$ on $h_M^{-1}(0)$. Define
$f_2=f_1+O(h_M^r)$, $r\ge 1$ if, for all $m_0\in h_M^{-1}(0)$, there
is a chart~$\nu$ at $n_0\equiv f_i(m_0)\in N$, and there is a
function $(\deltaf)_{\nu}$ defined near~$m_0$, and continuous
at~$m_0$, such that
\[
\nu\bigl(f_2(m)\bigr)-\nu\bigl(f_1(m)\bigr)=h_M(m)^r(\deltaf)_{\nu}(m),
\]
for all $m$ in some neighborhood of $m_0$.
\end{definition}

The definition of $f_2=f_1+O(h_M^r)$ does not depend on the
coordinate chart:~if~$\nu$ and~$\tilde{\nu}$ are two
coordinate charts at~$n_0$, as in Definition~\ref{def:order}, and
for~$m$ near to~$m_0$,
\[
\tilde\nu&\bigl(f_2(m)\bigr)-\tilde\nu\bigl(f_1(m)\bigr)\\
&=\bigl(\tilde\nu\circ\nu^{-1}\bigr)\bigl(\nu\bigl(f_2(m)\bigr)\bigr)
  -\bigl(\tilde\nu\circ\nu^{-1})\bigl(\nu\bigl(f_1(m)\bigr)\bigr)\\
&=\bigl(\tilde\nu\circ\nu^{-1}\bigr)\Bigl(\nu\bigl(f_1(m)\bigr) 
  +h_M(m)^r(\deltaf)_\nu(m)\Bigr)
  -\bigl(\tilde\nu\circ\nu^{-1}\bigr)\bigl(\nu\bigl(f_1(m)\bigr)\bigr)\\
&=\int_0^1\frac{d}{dt}\bigl(\tilde\nu\circ\nu^{-1}\bigr)
  \Bigl(\nu\bigl(f_1(m)\bigr)+t\,h_M(m)^r(\deltaf)_\nu(m)\Bigr)\,dt\\
&=h_M(m)^r\left[\int_0^1 D\bigl(\tilde\nu\circ\nu^{-1}\bigr)
  \Bigl(\nu(f_1(m))+t\,
  h_M(m)^r(\deltaf)_\nu(m)\Bigr)\,dt\right]\,(\deltaf)_\nu(m),
\]
as required.

The quantities $(\deltaf)_\nu(m_0)$ and $(\deltaf)_{\tilde
\nu}(m_0)$ transform as tangent vectors. Indeed,
\[
h_M(m)^r(\deltaf)_{\tilde\nu}(m)=\tilde\nu\bigl(f_2(m)\bigr)
  -\tilde\nu\bigl(f_1(m)\bigr),
\]
so 
\[
(\deltaf)_{\tilde\nu}(m)=\left[\int_0^1 D\bigl(\tilde\nu\circ\nu^{-1}\bigr)
  \Bigl(\nu(f_1(m))+t\,h_M(m)^r(\deltaf)_\nu(m)\Bigr)\,dt\right]
  (\deltaf)_\nu(m).
\]
At~$m_0\in h_M^{-1}(0)$, and setting~$n_0\equiv f_i(m_0)$,
\[
(\deltaf)_{\tilde\nu}(m_0)=D(\tilde\nu \circ \nu^{-1})\bigl(\nu(n_0)\bigr)
  (\deltaf)_\nu(m_0), 
\]
as required.

\begin{definition}
Let $(M,h_M)$ be a manifold, $f_2=f_1+O(h_M^r)$, and $m\in
h_M^{-1}(0)$. The vector $\res^r(f_2,f_1)(m)\in T_nN$ with
representation $(\deltaf)_\nu(m)$ for any chart~$\nu$ is called the
\defemph{$r$-residual of~$f_2$ with respect to~$f_1$}.
\end{definition}

The residual $\res^r(f_2,f_1)$ is defined only on $h_M^{-1}(0)\subset
M$ and takes values in~$TN$. The condition~$f_2=f_1+O(h_M^r)$, can be
localized to a point of~$M$ or a subset of~$M$ in the obvious way, and
the residual will be correspondingly localized. In general, jets of
mappings between manifolds carry an affine action by a geometrically
based vector space, amounting basically to the first nonzero term of
the Taylor series of the difference between two mappings. Also, the
notion of \defemph{contact} below is the same as the contact
equivalence in the definition of
jets~\cite{KolarI-MichorPW-SlovakJ-1993-1}.

If $(M,h_M)$ is a manifold, then, since $0$~is a regular value
of~$h_M$, there are \defemph{$h_M$-adapted charts} at each~$m_0\in
h_M^{-1}(0)$ i.e.\ charts such that the local representative of
$h_M$~is the projection~$(x,t)\mapsto t$. We can prove an equality or
formula concerning residuals in any chart since residuals are
geometric, and in particular, we can always use an $h_M$-adapted
chart.  

Suppose that $f_i\colon (U,h_U)\to V\subseteq\FF$,~$i=1,2$, are~$C^r$,
where $U$~is an open subset of~$\EE\times\RR$, where $\EE$ and
$\FF$~are a Banach spaces, and where $h_U(x,t)=t$. For fixed~$x$, the
Taylor expansions in~$t$ about~$t=0$ of the~$f_i$ are
\[
f_i(x,t)=f_i(x,0)+t\frac{\partial
f_i}{\partial t}(x,0)+\cdots+\frac{t^r}{r!}
\frac{\partial^rf_i}{\partial t^r}(x,0)+R_{r,i}(x,t)\,t^r,  
\]
where~\cite{AbrahamR-MarsdenJE-RatiuTS-1988-1}
\[
R_{r,i}(x,t)=\int_0^1\frac{(1-s)^{r-1}}{(r-1)!}\left(
\frac{\partial^rf_i}{\partial t^r}(x,st)-
\frac{\partial^rf_i}{\partial t^r}(x,0)\right)\,ds.
\]
The condition that~$f_2=f_1+O(h_U^r)$ at~$(x,0)$ is thus equivalent to
the condition that these Taylor expansions match at~$(x,0)$ up to and
including the degree~$r-1$ term. So, given this,
\[
f_2(x,t)&-f_1(x,t)\\
  &=\frac{t^r}{r!}\left(\frac{\partial^rf_2}{\partial t^r}
  (x,0)-\frac{\partial^rf_1}{\partial t^r}
  (x,0)\right)+R_{r,2}(x,t)\,t^r-R_{r,1}(x,t)\,t^r,
\]
which identifies $(\deltaf)_\nu(x,t)$ in these coordinates as
\[
(\deltaf)_\nu(x,t)=\frac1{r!}\left(
  \frac{\partial^rf_2}{\partial t^r}(x,0)
  -\frac{\partial^rf_1}{\partial t^r}(x,0)\right)+R_{r,2}(x,t)-R_{r,1}(x,t).
\]
Setting~$t=0$, the residual is
\[
  (\deltaf)_\nu(x,0)
  & = \frac1{r!}\left.\frac{\partial^r}{\partial t^r}\right|
  _{t=0}\bigl( f_2(x,t)-f_1(x,t) \bigr).
\]
If $\res^r(f_2,f_1)=0$, then the Taylor series of~$f_1$ and~$f_2$
agree up to and including terms of degree~$r$, one more than the
degree~$r-1$ agreement implied by~$f_2=f_1+O(h_U^r)$. Thus, if it is
necessary to establish with some computation, that two functions which
differ at order~$r$, actually differ at order~$r+1$, then one can
accomplish this by showing that $\res^r(f_2,f_1)=0$.

\begin{definition}
If $M$, $h_M$, and $f_i$ are as in Definition~\ref{def:order}, then
$f_1$ and~$f_2$ have order $h_M^{r-1}$~contact, or just have
contact~$r-1$, if $f_2=f_1+O(h_M^r)$.
\end{definition}

If $\pi\colon E\rightarrow M$ is a vector bundle then $\ker T\pi$ is a
subbundle of $TE$, called the \emph{vertical subbundle}. Recall that
the tangent space of the zero section defines a natural horizontal
subspace, so any vector of~$TE$ at the zero section can be split into
horizontal and vertical parts. This splitting can be defined by the
linear isomorphism
\[[120]
TM\oplus E\rightarrow TE:\quad(v_m,w_m)\mapsto
\left.\frac d{dt}\right|_{t=0}0_{m(t)}+\left.\frac d{dt}\right|_{t=0}t w_m,
\]
where $m(t)$ is a curve in $M$ such that $m^\prime(0)=v_m$.  If $z\in
T_{0_m}E$ then denote the horizontal and vertical parts of $z$ by
$\hor z\in T_mM$ and $\vrt z\in E_m$, respectively i.e.\ the inverse
of \eqref{120} is  $z\mapsto(\hor z,\vrt z)$.

If $f$ is a $C^1$ function such that $f(0)=0$, then it is
elementary that
\[
\hat f(t)\equiv\begin{cases}\displaystyle\frac{f(t)}t,&t\ne0,\\[5pt]
  \displaystyle f^\prime(0),&t=0,\end{cases}
\]
is continuous. The purpose of Lemma~\ref{prp:vb-div-by-zero} is to
show that a mapping on a manifold can be smoothly divided by a real
function that takes values in a vector bundle and is in the zero
section~$0(E)$ if the function vanishes.

\begin{proposition}\label{prp:vb-div-by-zero}
Let $(M,h_M)$ and $N$ be a manifolds, and let $\pi\colon E\rightarrow
N$ be a vector bundle. Suppose that $f\colon U\rightarrow E$ is~$C^k$,
$k\ge1$, and that $f(m)\in 0(E)$ whenever~$h_M(m)=0$. Then for all $m$
such that $h_M(m)=0$, there is a unique $e(m)\in E_{\pi(f(m))}$ such
that
\[[eq:blow_vertical_defn]
\vrt T_mf(v_m)=\bigl(dh_M(m)v_m\bigr)e(m),\qquad v_m\in T_mM.
\]
Moreover, the function $\hat f\colon M\rightarrow E$ defined by
\[
\hat f(m)\equiv\begin{cases}
\displaystyle\frac{f(m)}{h_M(m)},&h_M(m)\ne0,\\[7pt]
e(m),&h_M(m)=0,\end{cases}
\]
is $C^{k-1}$.
\end{proposition}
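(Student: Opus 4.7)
The plan is to reduce the statement to a one-dimensional division-by-zero argument by choosing good local charts. Near any point $m_0$ with $h_M(m_0)=0$, I would pick an $h_M$-adapted chart so that $M$ is locally modeled on $\EE\times\RR$ with $h_M(x,t)=t$, together with a vector bundle chart for $E$ near $\pi(f(m_0))$, so that $E$ locally takes the form $V\times\FF$ with $\pi$ the first projection. In these charts $f(x,t)=\bigl(f_1(x,t),f_2(x,t)\bigr)$ with both $f_i$ of class $C^k$, and the hypothesis that $f(m)\in 0(E)$ whenever $h_M(m)=0$ becomes $f_2(x,0)\equiv 0$.

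Next I would identify the vertical part of $Tf$ in these coordinates. At a point $(v,0)$ of the zero section, the splitting induced by~\eqref{120} coincides with the product decomposition $T_{(v,0)}(V\times\FF)\cong T_vV\oplus\FF$, horizontal summand $T_vV$ and vertical summand $\FF$. Hence, for $h_M(m)=0$,
\[\vrt T_mf(u,s)=Df_2(x,0)(u,s)=\frac{\partial f_2}{\partial t}(x,0)\,s,\]
the second equality because $f_2(\cdot,0)\equiv 0$ annihilates the $x$-derivative. Since $dh_M(m)(u,s)=s$ and $dh_M(m)$ is surjective, \eqref{eq:blow_vertical_defn} has a unique solution $e(m)$, given in this chart by $\partial f_2/\partial t(x,0)$; it lies in $E_{\pi(f(m))}$ by construction, and its geometric characterization in \eqref{eq:blow_vertical_defn} makes it chart-independent, hence globally well-defined.

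Smoothness of $\hat f$ then follows from Taylor's theorem with integral remainder. Because $f_2(x,0)=0$, the fundamental theorem of calculus gives
\[f_2(x,t)=t\int_0^1\frac{\partial f_2}{\partial t}(x,st)\,ds,\]
so in the chart
\[\hat f(x,t)=\int_0^1\frac{\partial f_2}{\partial t}(x,st)\,ds\]
for every $(x,t)$: the right-hand side equals $f_2(x,t)/t$ when $t\neq 0$ and equals $\partial f_2/\partial t(x,0)=e(m)$ when $t=0$. Since $f_2\in C^k$, its partial $\partial f_2/\partial t$ is $C^{k-1}$ jointly in $(x,t)$, and differentiation under the integral over the compact interval $[0,1]$ preserves $C^{k-1}$ regularity. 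Therefore $\hat f$ is $C^{k-1}$ in the chart, hence globally by the chart-independence of $e$.

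The main obstacle is cleanly identifying $\vrt Tf$ in the chart and verifying that the local quotient $f_2(x,t)/t$ really does represent the intrinsic object defined by \eqref{eq:blow_vertical_defn}; once that bridge between the geometric splitting at the zero section and the analytic Taylor computation is in place, the remainder is a standard division-by-zero argument in a Banach space.
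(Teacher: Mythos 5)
Your proof is correct. The local setup (an $h_M$-adapted chart together with a vector bundle chart, the identification of the vertical summand at the zero section with the fiber, and the computation $e(x,0)=\partial f_2/\partial t(x,0)$) is exactly the paper's; the paper additionally gives an intrinsic argument that $\vrt T_mf$ annihilates $\ker dh_M(m)$ by differentiating $f$ along curves in $h_M^{-1}(0)$, but your coordinate computation plus the observation that \eqref{eq:blow_vertical_defn} is chart-independent covers the same ground. Where you genuinely diverge is the regularity argument for $\hat f$: the paper writes out the order-$k$ Taylor polynomial of the fiber component at $(x_0,0)$, uses $f_2(x,0)\equiv 0$ to kill the pure-$\delta x$ terms, divides by $t$, and then invokes the \emph{converse of Taylor's theorem} to conclude $C^{k-1}$. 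You instead use the Hadamard-type integral representation $f_2(x,t)=t\int_0^1 \tfrac{\partial f_2}{\partial t}(x,st)\,ds$, which exhibits the fiber component of $\hat f$ globally in the chart as a parameter integral of a jointly $C^{k-1}$ integrand, so that differentiation under the integral over the compact interval $[0,1]$ gives $C^{k-1}$ in one step. Your route is shorter, gives a single closed formula valid for $t=0$ and $t\ne 0$ simultaneously, and sidesteps the converse of Taylor's theorem; the paper's route makes the individual Taylor coefficients of $\hat f$ visible, which is in the spirit of how residuals are computed elsewhere in Section~\ref{section:order}. Both arguments are valid in the Banach space setting. One small point of hygiene: $\hat f$ has two components in the chart, $(f_1(x,t),\,f_2(x,t)/t)$, and your displayed formula addresses only the fiber component; the base component $f_1$ is untouched by the division and is $C^k$, so nothing is lost, but it is worth saying explicitly.
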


\begin{proof}
If $h_M(m)=0$ and $v_m\in\ker dh_M(m)$ then $v_m=c^\prime(0)$ for some
curve $c(t)\in h_M^{-1}(0)$. Since $f$ is in the zero section
wherever~$h_M$ is zero, it follows that $f\circ c(t)$ takes values in
the zero section, so $(f\circ h_M)^\prime(0)$ is horizontal.  Thus
$\vrt T_mf(v_m)=0$ for all~$v_m\in\ker d h_M(m)$, so there is a unique
$e(m)\in E_{\pi(f(m))}$ satisfying~\eqref{eq:blow_vertical_defn}.

We can set up an $h_M$-adapted chart~$\sset{(x,t)}$ on~$M$ and a
vector bundle chart on~$N$, so that $E=\sset{(y,e)}$, and
$f(x,t)=\bigl(f_0(x,t),f_1(x,t)\bigr)$.  Then $f_1(x,0)=0$ for all
$x$, so
\[[eq:vb-div-by-zero-1]
\vrt T_mf(x,0)(\delta x,\delta t)
=\frac{\partial f_1}{\partial x}(x,0)\delta x+
  \frac{\partial f_1}{\partial t}(x,0)\delta t
=\frac{\partial f_1}{\partial t}(x,0)\delta t
\]
and
\[[eq:vb-div-by-zero-2]
dh_M(x,t)(\delta x,\delta t)=\delta t.
\]
By comparison of~\eqref{eq:blow_vertical_defn}
with~\eqref{eq:vb-div-by-zero-1} and~\eqref{eq:vb-div-by-zero-2},
\[
e(x,0)=\frac{\partial f_1}{\partial t}(x,0),
\]
and it is required to show that $\hat f_1$ defined by
\[
\hat f_1(x,t)\equiv\begin{cases}\displaystyle\frac{f_1(x,t)} t,&
  t\ne0,\\[7pt]
\displaystyle\frac{\partial f_1}{\partial t}(x,0),&t=0,
\end{cases}
\]
is $C^{k-1}$.  At any $(x_0,0)$ the Taylor expansion of~$f_1$ is
\[[eq:prp:vb-div-by-zero_1]
f_1(x,t)&=Df_1(x_0,0)(\delta x,t)+\cdots\\
&\qquad\cdots+D^kf_1(x_0,0)(\delta x,t)^k+R_k(x,t)(\delta x,t)^k
\]
where $\delta x=x-x_0$ and $R(x_0,0)=0$. By differentiating
$f_1(x,0)=0$ in~$x$, $D^if_1(x_0,0)(\delta x,0)^i=0$ for $1\le i\le
k$, and substituting $t=0$ into \eqref{eq:prp:vb-div-by-zero_1} gives
$R(x,0)(\delta x,0)^k=0$. Thus the left side of
\eqref{eq:prp:vb-div-by-zero_1} has $t$ as a factor and
\[[eq:prp:vb-div-by-zero_2]
\hat f_1(x,t)&=\frac{\partial f_1}{\partial t}(x_0,0)
  +\frac1tD^2f_1(x_0,0)(\delta x,t)^2+\cdots\\
&\qquad\cdots+\frac1tD^kf_1(x_0,0)(t,h)^k+\frac1tR_k(x,t)(\delta x,t)^k.
\]
Each of the first~$k$ terms of~\eqref{eq:prp:vb-div-by-zero_2} are
polynomial in~$(\delta x,t)$ and the remainder is polynomial
in~$(\delta x,t)$ of degree~$k-1$ with coefficients functions
of~$(x,t)$ that vanish at~$(x_0,0)$. Thus, by the converse of Taylor's
theorem~\cite{AbrahamR-MarsdenJE-RatiuTS-1988-1}, $\hat f_1(x,t)$
is~$C^{k-1}$ at any~$(x_0,0)$.
\end{proof}

\begin{proposition}\label{prp:vb-div-by-zero-contact-order}
Let $(M,h_M)$ and $N$ be a manifolds, let $\pi\colon E\rightarrow N$ a
vector bundle, and suppose $f_i$ and $\hat f_i$ are as in
Proposition~\ref{prp:vb-div-by-zero}, with $k\ge r$. Then $\hat
f_2=\hat f_1+O(h_M^{r-1})$ if $f_2=f_1+O(h_M^r)$, $r\ge2$.  Moreover,
$\res^r(f_2,f_1)$ takes values in the vertical bundle of $E$ and
$\res^{r-1}(\hat f_2,\hat f_1)=\res^r(f_2,f_1)$.
\end{proposition}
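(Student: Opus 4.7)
The plan is to reduce to a local computation. I would fix $m_0 \in h_M^{-1}(0)$, take an $h_M$-adapted chart $(x, t)$ at $m_0$ so $h_M(x, t) = t$, and a vector bundle trivialization of $E$ near $f_i(m_0) \in 0(E)$ in which $E$ looks like $V \times F$ with zero section $V \times \{0\}$. Writing $f_i = (f_{i,0}, f_{i,1})$, the assumption that $f_i$ lies in the zero section along $h_M^{-1}(0)$ gives $f_{i,1}(x, 0) = 0$, and by the Taylor discussion preceding Definition~\ref{def:order}, $f_2 = f_1 + O(h_M^r)$ translates to $\partial_t^j(f_{2,\bullet} - f_{1,\bullet})(x, 0) = 0$ for $0 \le j \le r - 1$.

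From the proof of Proposition~\ref{prp:vb-div-by-zero}, in the same chart $\hat f_i(x, t) = (f_{i,0}(x, t), \hat f_{i,1}(x, t))$ with $\hat f_{i,1}(x, t) = f_{i,1}(x, t)/t$ for $t \ne 0$, extended at $t = 0$ by $\partial_t f_{i,1}(x, 0)$. Since $f_{i,1}(x, 0) = 0$, Taylor-expanding in $t$ and dividing by $t$ yields the shift relation $\partial_t^k \hat f_{i,1}(x, 0) = \frac{1}{k+1}\,\partial_t^{k+1} f_{i,1}(x, 0)$. Matching of $f_{i,1}$ up to order $r - 1$ then forces matching of $\hat f_{i,1}$ up to order $r - 2$, and together with $f_{2,0} - f_{1,0} = O(h_M^r) \subset O(h_M^{r-1})$ this yields the first claim $\hat f_2 - \hat f_1 = O(h_M^{r-1})$.

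For the residual identity, I would evaluate $\res^{r-1}(\hat f_2, \hat f_1)(m_0)$ coordinate-wise: the base coordinate is $\frac{1}{(r-1)!}\,\partial_t^{r-1}(f_{2,0} - f_{1,0})(x, 0) = 0$ by the matching condition, and the fiber coordinate is $\frac{1}{(r-1)!}\,\partial_t^{r-1}(\hat f_{2,1} - \hat f_{1,1})(x, 0) = \frac{1}{r!}\,\partial_t^r(f_{2,1} - f_{1,1})(x, 0)$ by the shift relation, which is precisely the fiber coordinate of $\res^r(f_2, f_1)(m_0)$. Thus $\res^{r-1}(\hat f_2, \hat f_1)$ is vertical, and its value coincides with the vertical component of $\res^r(f_2, f_1)$ under the canonical identification of vertical tangent vectors along the same fiber of $E$.

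The hard part will be establishing that $\res^r(f_2, f_1)$ is itself entirely vertical, i.e., that the base Taylor coefficient $\frac{1}{r!}\,\partial_t^r(f_{2,0} - f_{1,0})(x, 0)$ also vanishes. The hypothesis $f_2 = f_1 + O(h_M^r)$ a priori controls derivatives only to order $r - 1$, so some extra geometric input is required; I would exploit the fact that $\pi \circ f_1$ and $\pi \circ f_2$ coincide on $h_M^{-1}(0)$ (since both $f_i$ land in the zero section there) and adapt the bundle trivialization so that the horizontal contribution is absorbed into the chosen chart, then verify chart-independence of the vanishing via the transformation rule for residuals sketched after Definition~\ref{def:order}.
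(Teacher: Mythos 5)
Your local computation is essentially the paper's own proof. In an $h_M$-adapted chart and a bundle trivialization, with $f_i=(f_{i,0},f_{i,1})$ and $\hat f_i=(f_{i,0},\hat f_{i,1})$, the base components of $\hat f_i$ coincide with those of $f_i$ and hence have vanishing $(r-1)$-residual because they already agree to order $r$, while dividing the fibre components by $t$ shifts the Taylor data down by one; the paper gets this shift by writing $f_{2,1}=f_{1,1}+t^r\deltaf$ and checking $\hat f_{2,1}=\hat f_{1,1}+t^{r-1}\deltaf$ separately for $t=0$ and $t\ne0$, whereas you use the coefficient identity $\partial_t^k\hat f_{i,1}(x,0)=\tfrac1{k+1}\partial_t^{k+1}f_{i,1}(x,0)$, but these are the same argument and both correctly identify the fibre coordinate of $\res^{r-1}(\hat f_2,\hat f_1)$ with that of $\res^r(f_2,f_1)$.

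Concerning the step you flag as ``the hard part'': your suspicion that verticality of $\res^r(f_2,f_1)$ does not follow from the hypotheses is correct, but the repair you sketch cannot succeed. The residual transforms as a tangent vector (this is verified immediately after Definition~\ref{def:order}), so a nonzero horizontal component cannot be made to vanish by adapting the trivialization; and the fact that $\pi\circ f_1=\pi\circ f_2$ on $h_M^{-1}(0)$ controls only the zeroth-order agreement of the base components, not their $r$-th Taylor coefficients. Indeed the claim fails as literally stated: with $E=V\times\FF$ trivial, $f_1(x,t)=\bigl(t^r a(x),0\bigr)$ and $f_2(x,t)=(0,0)$ both land in the zero section at $t=0$ and satisfy $f_2=f_1+O(t^r)$, yet $\res^r(f_2,f_1)=(-a,0)$ is horizontal (and $\res^{r-1}(\hat f_2,\hat f_1)=0$, so the displayed equality also fails if horizontal parts are included). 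Note that the paper's proof does not close this gap either: it only observes that the $(r-1)$-residual of the base components vanishes, which is what makes $\res^{r-1}(\hat f_2,\hat f_1)$ vertical, and never addresses the $r$-residual of the base components. So do not spend effort here; the correct content, which your computation already delivers, is that $\res^{r-1}(\hat f_2,\hat f_1)$ is vertical and equals the vertical part of $\res^r(f_2,f_1)$ under the identification $\ker T_e\pi\cong E_{\pi(e)}$, and the full statement as printed holds only under an extra hypothesis such as $\pi\circ f_2=\pi\circ f_1+O(h_M^{r+1})$.
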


\begin{proof}
Assume the context and notations of the proof
of~Proposition~\ref{prp:vb-div-by-zero}, so that
\[
f_1(x,t)=\bigl(f_{1,0}(x,t),f_{1,1}(x,t)\bigr),\qquad
f_2(x,t)=\bigl(f_{2,0}(x,t),f_{2,1}(x,t)\bigr).
\]
Since $f_{2,0}(x,t)=f_{1,0}(x,t)+O(t^r)$, the $r-1$~residual of the
first components of~$f_2$ and~$f_1$ is zero, and it suffices show that
\[[prp:vb-div-by-zero-contact-order_1]
\hat f_{2,1}(x,t)=\hat f_{1,1}(x,t)+t^{r-1}\deltaf(x,t)
\]
given $f_{2,1}=f_{1,1}+t^r\deltaf$, where $\deltaf$ is continuous
and
\[
\hat f_{i,1}(x,t)=\begin{cases}\displaystyle\frac{f_{i,1}(x,t)} t,&
  t\ne0,\\[7pt]
\displaystyle\frac{\partial f_{i,1}}{\partial t}(x,0),&t=0.
\end{cases}
\]
Equation~\eqref{prp:vb-div-by-zero-contact-order_1} can be shown in
the two cases $t=0$ and $t\ne0$: For~$t=0$,
\[
\frac{\partial f_{2,1}}{\partial t}(x,0)
  -\frac{\partial f_{1,1}}{\partial t}(x,0)
=\lim_{t\rightarrow 0}\frac{t^r\deltaf(x,t)}{t}
=\lim_{t\rightarrow 0}t^{r-1}\deltaf(x,t)=0,
\]
so even~$\hat f_{2,1}(x,t)=\hat f_{1,1}(x,t)$ in this case, whereas
for~$t\ne 0$,
\[
\hat f_{2,1}(x,t)=\frac{f_{2,1}(x,t)}t
 =\frac{f_{1,1}(x,t)+t^r\deltaf(x,t)}t
 =\hat f_{1,1}(x,t)+t^{r-1}\deltaf(x,t).
\]
\end{proof}

Proposition~\ref{prp:g_comp_f_order} is a key result because it can be
used to compute residuals without the invocation of local charts. Note
that if $(M,h_M)$ and $(N,h_N)$ are manifolds and $f\colon
M\rightarrow N$~is a $C^1$~function such that
$f\bigl(h_M^{-1}(0)\bigr)\subseteq h_N^{-1}(0)$, then for all~$m\in
h_M^{-1}(0)$, $d(h_N\circ f)(m)v_m=0$ for all~$v_m$ such that
$dh_M(m)v_m=0$. So one can define $\dot f\colon
h_M^{-1}(0)\rightarrow\RR$ by
\[
d(h_N\circ f)(m)=\dot f(m)\,dh_M(m).
\]
This is an instance of Proposition~\ref{prp:vb-div-by-zero} and it
follows that $\hat h_{N,f}$~defined by extending~$d(h_N\circ f)/h_M$
to~$\dot f$ on~$h_M^{-1}(0)$ is continuous.

\begin{proposition}\label{prp:g_comp_f_order}
Let $(M,h_M)$, $(N,h_N)$, and $P$ be manifolds, and suppose $f_i\colon
M\to N$ and $g_i\colon N\to P$, $i=1,2$ are $C^1$ and satisfy
$f_i\bigl(h_M^{-1}(0)\bigr)\subseteq h_N^{-1}(0)$, $f_2=f_1+O(h_M^r)$,
and $g_2=g_1+O(h_N^r)$.  Then $g_2\circ f_2=g_1\circ f_1+O(h_M^r)$.
Moreover, if~$h_M(m)=0$ and $n\equiv f_i(m)$, then
\[[eq:g_circ_f_res]
\res^r(g_2\circ f_2, g_1\circ f_1)(m)=\dot f_2(m)^r\,\res^r(g_2,g_1)(n)
  +T_ng_1\,\res^r(f_2,f_1)(m).
\]
\end{proposition}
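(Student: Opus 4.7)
The plan is to work in coordinate charts adapted to the height functions on $M$ and $N$, so the statement reduces to a direct computation using the $C^1$ fundamental theorem of calculus. Fix $m \in h_M^{-1}(0)$ and set $n \equiv f_i(m)$, so $h_N(n) = 0$. Choose an $h_M$-adapted chart $(x,t)$ at $m$, an $h_N$-adapted chart $(y,s)$ at $n$, and any chart on $P$. Writing $f_i(x,t) = \bigl(f_{i,0}(x,t), f_{i,1}(x,t)\bigr)$, the hypothesis $f_i\bigl(h_M^{-1}(0)\bigr) \subseteq h_N^{-1}(0)$ gives $f_{i,1}(x,0) = 0$, and Proposition~\ref{prp:vb-div-by-zero} supplies a continuous quotient $\hat f_{i,1}(x,t) \equiv f_{i,1}(x,t)/t$ with $\hat f_{2,1}(x,0) = \dot f_2(m)$. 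The residual hypotheses unpack in these charts as $g_2(y,s) - g_1(y,s) = s^r\,\psi(y,s)$ and $f_2(x,t) - f_1(x,t) = t^r\,\phi(x,t)$ with $\psi,\phi$ continuous, and $\psi(n)$, $\phi(m)$ represent $\res^r(g_2,g_1)(n)$ and $\res^r(f_2,f_1)(m)$ in the chosen charts.

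I would then split
\[
g_2\circ f_2 - g_1\circ f_1 = \bigl[(g_2-g_1)\circ f_2\bigr] + \bigl[g_1\circ f_2 - g_1\circ f_1\bigr]
\]
and handle each piece separately. For the first, substituting $s = f_{2,1}(x,t) = t\,\hat f_{2,1}(x,t)$ into $g_2-g_1 = s^r\psi$ gives $t^r\,\hat f_{2,1}(x,t)^r\,\psi\bigl(f_2(x,t)\bigr)$, which at $t=0$ collapses to $\dot f_2(m)^r\,\res^r(g_2,g_1)(n)$. For the second, the $C^1$ fundamental theorem of calculus applied to $g_1$ yields
\[
g_1\circ f_2 - g_1\circ f_1 = t^r \int_0^1 Dg_1\bigl(f_1(x,t) + \sigma\,t^r\,\phi(x,t)\bigr)\,\phi(x,t)\,d\sigma,
\]
whose value at $t=0$ is $T_n g_1\,\res^r(f_2,f_1)(m)$. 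Summing, $g_2\circ f_2 - g_1\circ f_1 = t^r\,\Phi(x,t)$ for a $\Phi$ continuous at $t=0$; this simultaneously establishes the contact statement $g_2\circ f_2 = g_1\circ f_1 + O(h_M^r)$ and reads off the residual at $m$ as the right-hand side of~\eqref{eq:g_circ_f_res}. Chart-independence of the identity is automatic, since each of $\dot f_2$, $\res^r(g_2,g_1)$, $\res^r(f_2,f_1)$, and $T_n g_1$ has already been shown to transform geometrically.

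There is no deep obstacle, but the point deserving attention is the first bracket: one must recognise that although $s$ and $t$ are coordinates on different manifolds, they are linked by $s\circ f_2 = t\,\hat f_{2,1}(x,t)$, so the factor $s^r$ from $g_2 - g_1$ produces precisely $t^r\,\dot f_2(m)^r$ at $t = 0$, which is where the $r$-th power of the scalar $\dot f_2$ in the formula comes from --- not from any Jacobian-type expression. Joint continuity of the second-bracket integrand at $t = 0$ is then immediate from continuity of $Dg_1$ and $\phi$ and of the evaluation point $f_1(x,t) + \sigma\,t^r\,\phi(x,t)$.
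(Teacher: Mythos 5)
Your proposal is correct and follows essentially the same route as the paper: the same decomposition $g_2\circ f_2-g_1\circ f_1=(g_2-g_1)\circ f_2+(g_1\circ f_2-g_1\circ f_1)$, the fundamental theorem of calculus for the second piece, and the continuous quotient $\hat f_{2,1}=f_{2,1}/t$ (the paper's $\hat h_{V,f_2}$ from Proposition~\ref{prp:vb-div-by-zero}) to extract the factor $\dot f_2(m)^r$ from the first. Your use of $h$-adapted charts rather than general local representatives of $h_M$, $h_N$ is an immaterial difference.
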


\begin{proof}
It suffices to consider the local setup where $\EE$, $\FF$, and
$\GG$ are Banach spaces, $U\subseteq\EE$ and $V\subset\FF$ are open,
$f\colon U\to V$, $g\colon V\to W$, $h_U\colon U\to\RR$, and
$h_V\colon V\to\RR$. Then
\[ 
(g_2\circ f_2)(x)&=g_1\bigl(f_1(x)+h_U(x)^r\deltaf(x)\bigr)
  +h_V\bigl(f_2(x)\bigr)^r\delta g\bigl(f_2(x)\bigr)\\
&=g_1\bigl(f_1(x)\bigr)+\int_0^1\frac d{ds}\,g_1\bigl(
  f_1(x)+sh_U(x)^r\deltaf(x)\bigr)\,ds\\
&\qquad\mbox{}+h_V\bigl(f_2(x)\bigr)^r\delta g(f_2(x)\bigr)\\
&=g_1\bigl(f_1(x)\bigr)
+h_U(x)^r\left[\int_0^1
  Dg_1\bigl(f_1(x)+s\,h_U(x)^rf(x)\bigr)\,ds\right]
  \deltaf(x)\\
&\mbox\qquad\mbox{}+h_U(x)^r\hat h_{V,f_2}(x)^r\delta g\bigl(f_2(x)\bigr). 
\]
Assuming $x$ satisfies $h_U(x)=0$ and putting $y\equiv f_i(x)$,
results in 
\[ 
\res^r(g_2\circ f_2,g_1\circ f_1)(x)=Dg_1\bigl(y)
  \deltaf(x)+\hat h_{V,f}(x)^r\delta g(y\bigr),
\]
which is the local form of~\eqref{eq:g_circ_f_res}.

\end{proof}

\begin{remark}
If $r\ge2$, then $T_ng_1$ and $\dot f_2$ can be replaced by $T_ng_2$
and $\dot f_1$ respectively in Equation~\eqref{eq:g_circ_f_res}.  If
$h_N\circ f_i=h_M$ then $\dot f_i=1$ in any case.  Also, if $g_1=g_2$,
one can dispense with~$h_N$ and the assumption that
$f_i\bigl(h_M^{-1}(0)\bigr)\subseteq h_N^{-1}(0)$, obtaining the
formula
\[
\res^r(g\circ f_2, g\circ f_1)(m)=T_ng\,\res^r(f_2,f_1)(m).
\]
\end{remark}

Let $\pi\colon E\to B$ be a vector bundle with typical fiber~$\EE$,
and let $\EE_0$~be a closed split subspace of~$\EE$.  We will have use
of the \defemph{$\EE_0$-Grassmann bundle of~$E$, denoted
$\pi^{G(\EE_0,E)}_M\colon G(\EE_0,E)\to M$}, by which we mean the set
of subspaces of the fibers of~$E$ that are linearly isomorphic
to~$\EE_0$ i.e.\ the coset space of continuous linear injections (with
closed split image) of~$\EE_0$ into the fibers of~$E$ by the action
of~$\GL(\EE_0)$. The projection $\pi^{G(\EE_0,E)}_M$ associates
subspaces of the fiber~$E_m$ to~$m$, and the typical fiber
of~$G(\EE_0,E)$ is the Grassmann manifold~$G_{\EE_0}(\EE)$.  For more
information on Grassmann manifolds in the Banach space context,
see~\cite{AbrahamR-MarsdenJE-RatiuTS-1988-1}.

\begin{remark}
If $E^\prime$ is a $C^r$~subbundle of~$E$, with typical
fiber~$\EE^\prime$, then there is defined the $C^r$~map
$\iota_{E^\prime}\colon M\to G(\EE^\prime,E)$ that assigns to
any~$m\in M$ the subspace~$\iota_{\E^\prime}(m)=E^\prime_m$.
\end{remark}

\begin{remark}\label{rmk:grassmann-subspace}
As is well known, the tangent space at~$\BB\in G_{\EE_0}(\EE)$ is
canonically $\hom(\BB,\EE/\BB)$. Indeed, if $\BB(t)$ is a $C^1$ curve
in $G_{\EE_0}(\EE)$ with $\BB(0)=\BB$, then choose a splitting
$\EE=\BB\oplus\FF$ and define $\dot\BB\colon\BB\to\FF$ by
\[
\dot\BB\equiv\left.\frac{d}{dt}\right|_{t=0}
  \pi^{\EE}_{\EE/\BB}\circ\Bigl(\pi^{\BB\oplus
  \FF}_{\BB}\Bigm|\BB(t)\Bigr)^{-1},
\]
where $\pi^{\EE}_{\EE/\BB}$ denotes the projection of~$\EE$ to the
quotient $\EE/\BB$, and $\pi^{\BB\oplus\FF}_\BB$ denotes the
projection to~$\BB$ using the decomposition $\EE=\BB\oplus\FF$.
One verifies that $\dot\BB$ is independent of the choice of the
complement~$\FF$ and depends only on the tangent vector of~$\BB(t)$ at
$t=0$.
\end{remark}

\begin{definition}\label{def:D2_D1_order}
Let $(M,h_M)$ be a manifold and let $\kD_i$, $i=1,2$ be distributions
on $M$ such that $\kD_1|h_M^{-1}(0)=\kD_2|h_M^{-1}(0)$.  Define
$\kD_2=\kD_1+O(h_M^r)$ if $\iota_{\DD_2}=\iota_{\DD_1}+O(h_M^r)$, and
define $\res^r(\kD_2,\kD_1)\equiv\res^r(\iota_{\DD_2},\iota_{\DD_1})$.
\end{definition}

In the context of Definition~\ref{def:D2_D1_order}, note that the
assignment of the fibers of subbundles into the Grassmann bundle
preserves fibers, so $\pi^{G_\DD(TM)}_M\circ\iota_{D_i}=\!1_M$, and
\[
T\pi^{G_\DD(TM)}_M\res\bigl(\iota_{\kD_2},\iota_{\kD_1}\bigr)
&=\res\bigl(\pi^{G_\DD(TM)}_M\circ\iota_{D_2},
  \tau^{G_\DD(TM)}_M\circ\iota_{\kD_1}\bigr)\\
&=\res(\!1_M,\!1_M)\\
&=0,
\]
which shows that, for all $m\in h_M^{-1}(0)$, $\res^r(\kD_2,\kD_1)(m)$
is a vertical vector in $T\bigl(G_\DD(TM)\bigr)$.  Such vertical
vectors are derivatives of curves in the corresponding fiber i.e.\
derivatives of curves in the Grassmann manifold $G_\DD(T_mM)$. Thus,
the residual of two vector bundles~$\kD_i$ at~$m$ is an element of the
tangent space at the common element $(\kD_i)_m$ of the Grassmann
manifold $G_\DD(TM)$, which, by Remark~\ref{rmk:grassmann-subspace},
can be regarded as an element of
$\hom\bigl((\kD_i)_m,T_mM/(\kD_i)_m\bigr)$.

%
\section{Equations}
\label{section:equations}
%
%
Computing with the order notation on manifolds might require the
determining the contact or residual of the solutions of two implicit
equations with a given contact or residual. A most basic result that
enables this sort of argument is
Proposition~\ref{prp:f_inverse_order}, which guarantees the contact of
two inverse mappings, given the contact of two diffeomorphisms.

\begin{proposition}\label{prp:f_inverse_order}
Let $(M,h_M)$ and $(N,h_N)$ be manifolds, and let $f_i\colon M\to N$
be $C^k$~diffeomorphisms, $k\ge1$, be such that $f_i$ maps
$h_M^{-1}(0)$ into $h_N^{-1}(0)$, $i=1,2$.  Then $f_2=f_1+O(h_M^r)$
implies $f^{-1}_2=f^{-1}_1+O(h_N^r)$.
\end{proposition}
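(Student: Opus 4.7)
The plan is to reduce the inverse statement to the composition formula already established in Proposition \ref{prp:g_comp_f_order}, and then change variables. Specifically, applying the Remark that follows Proposition \ref{prp:g_comp_f_order} with the common outer map $g = f_1^{-1}$ and the maps $f_1, f_2$, the hypothesis $f_2 = f_1 + O(h_M^r)$ gives at once
\[
f_1^{-1}\circ f_2 \;=\; f_1^{-1}\circ f_1 + O(h_M^r) \;=\; \id_M + O(h_M^r).
\]
In a chart $\mu$ on $M$ near an arbitrary $x_0\in h_M^{-1}(0)$, this reads
\[
\mu\bigl(f_1^{-1}(f_2(x))\bigr) - \mu(x) \;=\; h_M(x)^r\,\eta(x),
\]
with $\eta$ continuous at $x_0$. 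I then substitute $x = f_2^{-1}(y)$ on a neighborhood of $y_0 \equiv f_2(x_0)$ to obtain
\[
\mu\bigl(f_1^{-1}(y)\bigr) - \mu\bigl(f_2^{-1}(y)\bigr) \;=\; h_M\bigl(f_2^{-1}(y)\bigr)^{r}\,\eta\bigl(f_2^{-1}(y)\bigr).
\]

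To identify this as $O(h_N^r)$ at $y_0$, I must rewrite the factor $h_M(f_2^{-1}(y))^r$ as $h_N(y)^r$ times a continuous function. This is the step where the hypothesis that $f_i$ sends $h_M^{-1}(0)$ into $h_N^{-1}(0)$ is used. Since $f_2$ is a diffeomorphism and $0$ is a regular value of both $h_M$ and $h_N$, the two zero sets are codimension-one submanifolds of the same dimension, so the restriction of $f_2$ to $h_M^{-1}(0)$ is a local diffeomorphism onto an open subset of $h_N^{-1}(0)$. Thus $f_2^{-1}$ maps a neighborhood of $y_0$ in $h_N^{-1}(0)$ back into $h_M^{-1}(0)$, and the real-valued function $h_M\circ f_2^{-1}$ vanishes on $h_N^{-1}(0)$ near $y_0$. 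Proposition \ref{prp:vb-div-by-zero}, applied to the trivial line bundle $\RR\times N\to N$, then yields a continuous function $c$ near $y_0$ with $h_M\circ f_2^{-1} = h_N\cdot c$.

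Substituting back produces
\[
\mu\bigl(f_2^{-1}(y)\bigr) - \mu\bigl(f_1^{-1}(y)\bigr) \;=\; h_N(y)^{r}\Bigl(-c(y)^{r}\,\eta\bigl(f_2^{-1}(y)\bigr)\Bigr),
\]
with the bracketed factor continuous at $y_0$. Since $y_0$ was an arbitrary point of $h_N^{-1}(0)$ in the image of $f_2|_{h_M^{-1}(0)}$ (and every point of $h_N^{-1}(0)$ near such an image point is itself such a point, by the local surjectivity above), this is precisely the condition $f_2^{-1} = f_1^{-1} + O(h_N^r)$ from Definition \ref{def:order}.

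The main obstacle is not the composition formula, which is entirely mechanical here, but the intermediate geometric check that $f_2^{-1}$ takes $h_N^{-1}(0)$ locally into $h_M^{-1}(0)$ and that $h_M\circ f_2^{-1}$ factors continuously through $h_N$; once this is in hand, the change of variables produces the desired order estimate immediately. If one wanted to record the residual as well, the same calculation together with the formula \eqref{eq:g_circ_f_res} would give $\res^{r}(f_2^{-1},f_1^{-1})(y) = -Tf_1^{-1}\,\res^r(f_2,f_1)\bigl(f_1^{-1}(y)\bigr)$ (using $\dot f_i \equiv 1$ whenever $h_N\circ f_i = h_M$), though the statement only asks for contact.
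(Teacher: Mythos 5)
Your proof is correct, but it takes a genuinely different route from the paper's. The paper bootstraps on the contact order: writing $g_i=f_i^{-1}$, it starts from $g_2=g_1+O(h_N^1)$, applies the residual formula \eqref{eq:g_circ_f_res} to the identity $f_i\circ g_i=\id_N$ to obtain $0=\dot g_2(n)^l\,\res^l(f_2,f_1)(m)+T_nf_1\,\res^l(g_2,g_1)(n)$, and concludes that $\res^l(g_2,g_1)$ vanishes whenever $\res^l(f_2,f_1)$ does, so the contact climbs inductively from $l=1$ to $l=r$. You instead post-compose with the fixed map $f_1^{-1}$ to get $f_1^{-1}\circ f_2=\id_M+O(h_M^r)$ in a single step, and then transfer the chart estimate from $M$ to $N$ by the substitution $x=f_2^{-1}(y)$, which obliges you to factor $h_M\circ f_2^{-1}$ through $h_N$ via Proposition~\ref{prp:vb-div-by-zero}. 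Your route avoids the induction and has the merit of making explicit that $f_2$ carries $h_M^{-1}(0)$ locally \emph{onto} $h_N^{-1}(0)$ --- a point the paper uses silently, since applying Proposition~\ref{prp:g_comp_f_order} to $f_i\circ f_i^{-1}$ requires $f_i^{-1}\bigl(h_N^{-1}(0)\bigr)\subseteq h_M^{-1}(0)$. The paper's induction, in exchange, yields the top-order residual identity for $\res^r(f_2^{-1},f_1^{-1})$ as a byproduct, which your closing remark recovers only by a separate appeal to \eqref{eq:g_circ_f_res}. Both arguments share the tacit restriction to points of $h_N^{-1}(0)$ lying in $f_i\bigl(h_M^{-1}(0)\bigr)$ (elsewhere even the precondition $f_1^{-1}=f_2^{-1}$ of Definition~\ref{def:order} is not guaranteed), so this is not a defect peculiar to your write-up.
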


\begin{proof}
Suppose $l$ is such that $g_2\circ f_2=g_1\circ f_1+O(h_M^l)$. This is
true for $l=1$, because $g_2\circ f_2=g_1\circ f_1$ on
$h_M^{-1}(0)$. Taking the residuals of $f_i\circ g_i=\!1$, one obtains
\[
0=\res^l(f_2\circ g_2, f_1\circ g_1)(n)=\dot g_2(n)^l\,\res^l(f_2,f_1)(m)
  +T_nf_1\,\res^l(g_2,g_1)(n).
\]
Thus $\res^l(g_2,g_1)(n)=0$ if $\res^l(f_2,f_1)(m)=0$
i.e.~$g_2=g_1+O(h_N^l)$ if~$f_2=f_1+O(h_M^l)$, which inductively
gives $g_2=g_1+O(h_N^r)$.
\end{proof}

Another requirement is to semiglobally construct mappings from
graphs. Proposition~\ref{prp:mapping-from-graph} uses the semiglobal
inverse function theorem to provide such a result for a perturbation
of an identity mapping.

\begin{proposition}\label{prp:mapping-from-graph} 
Let $M$ and $(N,h_N)$ be manifolds. Let $\gamma\colon U\subseteq N\to
M\times M$ be $C^k$, $k\ge 1$. Suppose that $h_N^{-1}(0)\subseteq U$
and $\left.\gamma\right|h_N^{-1}(0)$ is a diffeomorphism
to~$\Delta(M\times M)$. Then there are neighborhoods $\tilde
U\subseteq U$ of~$h_N^{-1}(0)$ and $V\subseteq M\times\mathbb{R}$
of~$M\times\sset0$ such that, for all $(m,h)\in V$, there is a
unique~$\tilde m\in M$ such that, for some $n\in\tilde U$,
$\gamma(n)=(m,\tilde m)$ and~$h_N(n)=h$.  The map $f_\gamma\colon V\to
M$ defined by $f_\gamma(m,h)\equiv\tilde m$ is~$C^k$.
\end{proposition}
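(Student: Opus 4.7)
The plan is to recast the problem as an inversion question and apply the semiglobal inverse function theorem (Theorem~\ref{thm:semiglobal-inverse-fn}). Writing $p_1,p_2\colon M\times M\to M$ for the projections, I would define $F\colon U\to M\times\RR$ by $F(n)\equiv\bigl(p_1(\gamma(n)),\+1 h_N(n)\bigr)$. If $F$ can be inverted on a neighborhood of~$h_N^{-1}(0)$, then $f_\gamma\equiv p_2\circ\gamma\circ F^{-1}$ will give the required map.

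To set up Theorem~\ref{thm:semiglobal-inverse-fn}, take $M_0\equiv h_N^{-1}(0)$, which is a closed submanifold of~$U$ because $0$ is a regular value of~$h_N$, and $N_0\equiv M\times\sset0$, which is a closed submanifold of $M\times\RR$. The hypothesis that $\gamma|h_N^{-1}(0)$ is a diffeomorphism onto $\Delta(M\times M)$, composed with the diffeomorphism $p_1|\Delta(M\times M)\to M$, yields that $F|M_0\colon M_0\to N_0$ is a $C^k$ diffeomorphism.

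The key step is to verify that $F$ is a local diffeomorphism at each $n\in M_0$. Write $\gamma(n)=(m,m)$ and pick a splitting $T_nN=\ker dh_N(n)\oplus\RR t$ with $dh_N(n)(t)=1$. For $v_0\in\ker dh_N(n)=T_nM_0$, the fact that $\gamma|M_0$ has image in the diagonal forces $T_n\gamma(v_0)=(u_0,u_0)$, and the fact that $\gamma|M_0$ is a diffeomorphism onto $\Delta(M\times M)$ makes $v_0\mapsto u_0$ a linear isomorphism onto $T_mM$. Writing $T_n\gamma(t)=(a,b)$, one computes $T_nF(v_0+\lambda t)=(u_0+\lambda a,\lambda)$, and from the block form it is clear that $T_nF\colon T_nN\to T_mM\oplus\RR$ is a linear isomorphism; hence $F$ is a local diffeomorphism at~$n$ by the ordinary inverse function theorem.

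Theorem~\ref{thm:semiglobal-inverse-fn} then supplies open neighborhoods $\tilde U\supseteq M_0$ and $V\supseteq N_0$ between which $F$ restricts to a $C^k$~diffeomorphism. Setting $f_\gamma(m,h)\equiv p_2\bigl(\gamma(F^{-1}(m,h))\bigr)$ and $n\equiv F^{-1}(m,h)\in\tilde U$ gives $p_1(\gamma(n))=m$, $h_N(n)=h$, and $\gamma(n)=(m,f_\gamma(m,h))$; uniqueness of the witness~$\tilde m$ within $\tilde U$ follows at once from injectivity of $F$ on $\tilde U$, and $f_\gamma$ is $C^k$ as a composition of $C^k$~maps. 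The main obstacle is the linearization of Step three above: one must exploit the diffeomorphism property of $\gamma|M_0$ onto the full diagonal (not merely that $p_1\circ\gamma|M_0$ is a diffeomorphism) in order to control the kernel~$\ker dh_N(n)$, while the regularity of $h_N$ provides the independent $\RR$~direction needed to complete the isomorphism.
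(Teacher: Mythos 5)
Your proof is correct and follows essentially the same route as the paper: both define the map $n\mapsto\bigl((\pi_1\circ\gamma)(n),h_N(n)\bigr)$, apply the semiglobal inverse function theorem to invert it on a neighborhood of $h_N^{-1}(0)$, and set $f_\gamma=\pi_2\circ\gamma\circ\psi^{-1}$, with uniqueness coming from injectivity of the inverted map. Your explicit verification that the linearization is a block-triangular isomorphism on $\ker dh_N(n)\oplus\RR t$ is a detail the paper leaves implicit, but it is the same argument.
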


\begin{proof}
Let $\pi_1$ and~$\pi_2$ be the projections on $M\times M$
i.e.~$\pi_i(m_1,m_2)\equiv m_i$, $i=1,2$.  Define $\psi\colon U\to
M\times\RR$ by $\psi(n)\equiv\bigl((\pi_1 \circ
\gamma)(n),h_N(n)\bigr)$.  The map $\psi$ is a diffeomorphism from
$h_N^{-1}(0)$ to~$M\times\sset0$ and, by the inverse function
theorem, is a local diffeomorphism at each point of~$h_N^{-1}(0)$. By
Lemma~\ref{thm:semiglobal-inverse-fn}, $\psi$~is a diffeomorphism from
a neighborhood~$\tilde{U}\subseteq U$ of~$h_N^{-1}(0)$ to a
neighborhood~$V$ of~$M\times\sset0$.

If $(m,h)\in V$, then let $n\in\tilde U$ be such that $\psi(n)=(m,h)$,
and define $\tilde m\equiv\pi_2\bigl(\gamma(n)\bigr)$, so that
$f_\gamma(m,h)\equiv\tilde m=(\pi_2\circ\gamma\circ\psi^{-1})(m,h)$.
From $\psi(n)=(m,h)$ follows
$\bigl(\pi_1\bigl(\gamma(n)\bigr),h_N(n)\bigr)=(m,h)$ so
$\gamma(n)=(m,\tilde m)$ and $h_N(n)=h$, which are the required
properties of~$\tilde m$.  If there is another such,
say~$\tilde{m}^\prime$, then there would have to be
an~$n^\prime\in\tilde U$ such that $\gamma(n^\prime)=(m,\tilde
m^\prime)$ and~$\h_N(n^\prime)=h$, so $\psi(n^\prime)=(m,h)=\psi(n)$
which, since $\psi$ is a diffeomorphism, implies $n=n^\prime$. Hence
$(m,\tilde m)=\gamma(n)=\gamma(n^\prime)=(m,\tilde m^\prime)$, so
$\tilde m=\tilde m^\prime$.
\end{proof}

Proposition~\ref{prp:mapping-from-graph} establishes the contact of
the mappings constructed from graphs is equal to the contact of the
graphs. Further, the mappings have one higher contact if there is
present a symmetry condition for the \emph{residuals} of the graphs.

\begin{proposition}\label{prp:mapping-from-graph-contact-order}
Let $(M,h_M)$ and $(N,h_N)$ be manifolds and $\gamma_i$ and~$f_i$ be
as in Proposition~\ref{prp:mapping-from-graph}. Then
$f_{\gamma_2}=f_{\gamma_1}+O(h_M^r)$
if~$\gamma_2=\gamma_1+O(h_N^r)$. If~$\res^r(\gamma_2,\gamma_1)$ is
symmetric i.e.\ $\delta\gamma^1(n)=\delta\gamma^2(n)$ for all~$n\in
N$, where $\res^r(\gamma_2,\gamma_1)=(\delta\gamma^1,\delta\gamma^2)$,
then $f_{\gamma_2}=f_{\gamma_1}+O(h^{r+1})$.
\end{proposition}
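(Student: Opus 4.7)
The plan is to write $f_{\gamma_i}$ as the composition $\pi_2 \circ \gamma_i \circ \psi_i^{-1}$, where $\psi_i(n) \equiv \bigl(\pi_1 \gamma_i(n), h_N(n)\bigr)$ is the map from the proof of Proposition~\ref{prp:mapping-from-graph}, and then apply Propositions~\ref{prp:f_inverse_order} and~\ref{prp:g_comp_f_order} to propagate residuals through the inversion and the composition. Equip $V \subseteq M \times \RR$ with $h(m,t) \equiv t$; then $h \circ \psi_i = h_N$, so $h_N \circ \psi_i^{-1} = h$ and $\dot{\psi_i^{-1}} = 1$, and similarly $\dot\gamma_i$ does not enter because $\pi_2 \circ \gamma_i$ will be composed against a $\psi_i^{-1}$ whose $\dot{}$ is already~$1$.

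The first step is to read off $\res^r(\psi_2, \psi_1)$ from $\res^r(\gamma_2, \gamma_1) = (\delta\gamma^1, \delta\gamma^2)$: the second component of $\psi_i$ is $h_N$, independent of~$i$, while the remark following Proposition~\ref{prp:g_comp_f_order} (common $g = \pi_1$) gives $\res^r(\pi_1\gamma_2, \pi_1\gamma_1) = T\pi_1 \cdot \res^r(\gamma_2,\gamma_1) = \delta\gamma^1$. Hence $\res^r(\psi_2,\psi_1)(n_0) = (\delta\gamma^1(n_0), 0)$. Taking residuals of $\psi_i \circ \psi_i^{-1} = \!1$ via Proposition~\ref{prp:g_comp_f_order} (with the $\dot{}$'s equal to $1$) then yields
\[
\res^r(\psi_2^{-1}, \psi_1^{-1})(x_0) = -(T_{n_0}\psi_1)^{-1}\bigl(\delta\gamma^1(n_0), 0\bigr),
\]
where $n_0 \equiv \psi_i^{-1}(x_0)$. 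Applying Proposition~\ref{prp:g_comp_f_order} once more to $f_{\gamma_i} = (\pi_2\gamma_i) \circ \psi_i^{-1}$ delivers
\[
\res^r(f_{\gamma_2}, f_{\gamma_1})(x_0) = \delta\gamma^2(n_0) - T_{n_0}(\pi_2\gamma_1) \cdot (T_{n_0}\psi_1)^{-1}\bigl(\delta\gamma^1(n_0), 0\bigr).
\]
This already proves $f_{\gamma_2} = f_{\gamma_1} + O(h^r)$, giving the first assertion.

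The main obstacle is identifying the second term on the right, and this is where the diagonal hypothesis pays off. Setting $w \equiv (T_{n_0}\psi_1)^{-1}(\delta\gamma^1(n_0), 0)$, the defining equations $T_{n_0}(\pi_1\gamma_1)(w) = \delta\gamma^1(n_0)$ and $dh_N(n_0)(w) = 0$ force $w$ to be tangent to $h_N^{-1}(0)$. Because $\gamma_1|h_N^{-1}(0)$ is a diffeomorphism to $\Delta(M \times M)$, the vector $T\gamma_1(w)$ lies in $T\Delta(M \times M)$, so its two components coincide: $T_{n_0}(\pi_2\gamma_1)(w) = T_{n_0}(\pi_1\gamma_1)(w) = \delta\gamma^1(n_0)$. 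Substituting back,
\[
\res^r(f_{\gamma_2}, f_{\gamma_1})(x_0) = \delta\gamma^2(n_0) - \delta\gamma^1(n_0).
\]
Under the symmetry hypothesis $\delta\gamma^1 = \delta\gamma^2$ this residual vanishes, so the contact is promoted by one and $f_{\gamma_2} = f_{\gamma_1} + O(h^{r+1})$, as required.
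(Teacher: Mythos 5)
Your proof is correct, and its first half is the paper's argument verbatim: factor $f_{\gamma_i}=\pi_2\circ\gamma_i\circ\psi_i^{-1}$ and invoke Propositions~\ref{prp:g_comp_f_order} and~\ref{prp:f_inverse_order}. For the promotion to order $r+1$ you and the paper land on the same key identity, $\res^r(f_{\gamma_2},f_{\gamma_1})=T\pi_2\,\res^r(\gamma_2,\gamma_1)-T\pi_1\,\res^r(\gamma_2,\gamma_1)=\delta\gamma^2-\delta\gamma^1$, but by genuinely different bookkeeping. The paper takes residuals of the rearranged identity $\pi_2\circ\gamma_i=f_{\gamma_i}\circ(\pi_1\circ\gamma_i,h_N)$, which avoids ever inverting $T\psi_1$, and identifies the correction term via the fact that $f_{\gamma_1}(m,0)=m$. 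You instead compute $\res^r(\psi_2,\psi_1)=(\delta\gamma^1,0)$, propagate it through the inverse to get $\res^r(\psi_2^{-1},\psi_1^{-1})=-(T_{n_0}\psi_1)^{-1}(\delta\gamma^1(n_0),0)$, and then identify $T(\pi_2\circ\gamma_1)(T_{n_0}\psi_1)^{-1}(\delta\gamma^1(n_0),0)=\delta\gamma^1(n_0)$ by observing that the preimage vector $w$ is killed by $dh_N$, hence tangent to $h_N^{-1}(0)$, where $\gamma_1$ is diagonal-valued. The two closing facts --- $f_{\gamma_1}(\cdot,0)=\mathrm{id}_M$ versus $\gamma_1|h_N^{-1}(0)$ mapping into $\Delta(M\times M)$ --- are the same geometric input in two guises, so the proofs are equivalent in content; the paper's is marginally shorter, while yours has the side benefit of making the intermediate residuals of $\psi_i$ and $\psi_i^{-1}$ explicit. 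Your final step (a vanishing $r$-residual yields $O(h^{r+1})$ contact) relies on the Taylor-remainder discussion of Section~\ref{section:order} and hence on one extra degree of smoothness, but the paper makes the identical tacit assumption, and it is covered by the blanket smoothness convention of the introduction.
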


\begin{proof}
Assume the context and notations of the proof of
Proposition~\ref{prp:mapping-from-graph}. Since
$f_{\gamma_i}=\pi_3\circ\gamma_i\circ\psi_i^{-1}$, where
$\psi_i=(\gamma_i,h_N)$,
Propositions~\ref{prp:g_comp_f_order} and~\ref{prp:f_inverse_order}
imply $f_{\gamma_2}=f_{\gamma_1}+O(h_M^r)$
if~$\gamma_2=\gamma_1+O(h^r)$. Then
\[
\pi_2\circ\gamma_i=f_{\gamma_i}\circ(\pi_1\circ\gamma_i,h_N),
\] 
so, taking the residuals of this equation at~$n\in h_N^{-1}(0)$, and
setting~$m\equiv\pi_1\bigl(\gamma_i(n)\bigr)$, gives
\[
&T_{(m,m)}\pi_2\,\res^r(\gamma_2,\gamma_1)(n)\\
&\qquad
=\res^r\bigl(f_{\gamma_2}\circ(\pi_1\circ\gamma_2,h_N),f_{\gamma_1}\circ(
  \pi_1\circ\gamma_1,h_N)\bigr)(n)\\
&\qquad=\res^r(f_{\gamma_2},f_{\gamma_1})(m,0)
+T_{(m,0)}f_{\gamma_1}\res^r\bigl((\pi_1\circ\gamma_2,h_N),
(\pi_1\circ\gamma_1,h_N)\bigr)(n)\\
&\qquad=\res^r(f_{\gamma_2},f_{\gamma_1})(m,0)
+T_{(m,0)}f_{\gamma_1}\bigl(T\pi_1\res^r(\gamma_2,\gamma_1),0\bigr)(n).
\]
Also, $f_{\gamma_1}\bigl(\pi_1(m_1,m_2),0\bigr)=m_1$ for all $m_1\in
M$, so the last term of the equation immediately above
is~$T_{(m,m)}\pi_1\bigl(\res^r(\gamma_2,\gamma_1)(n)\bigr)$, and hence
\[
\res^r(f_{\gamma_2},f_{\gamma_1})(m,0)=
T_{(m,m)}\pi_2\res^r(\gamma_2,\gamma_1)(n)
-T_{(m,m)}\pi_1\res^r(\gamma_2,\gamma_1)(n),
\]
which is zero if~$\res^r(\gamma_2,\gamma_1)(n)$ is symmetric.
\end{proof}

%
\section{Skew critical problems}
\label{section:skew-problems}
%
%
Theorem~\ref{thm:gamma_order} is a main objective of this work.  It
uses the infrastructure we have developed to show that the contact of
solutions of nondegenerate skew critical problems is the same as the
contact of their data. Moreover, the residuals of the solutions are
determined geometrically through the residuals of the data.

\begin{theorem}\label{thm:gamma_order} 
Let $(M,h_M)$ and $(N,h_N)$ be manifolds and suppose $\alpha^i$,
$g_i$, $\gamma_i$ and $\kD_i$, $i=1,2$ are as in
Theorem~\ref{theorem:semiglobal-skew-critical-point} and $M_0\subseteq
h_M^{-1}(0)$, $N_0\subseteq h_N^{-1}(0)$. If
$\alpha^2=\alpha^1+O(h_M^r)$, $g_2=g_1+O(h_M^r)$, and
$\kD_2=\kD_1+O(h_M^r)$, then $\gamma_2=\gamma_1+O(h_N^r)$.
\end{theorem}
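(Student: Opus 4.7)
The plan is to mimic the inductive residual argument used to prove Proposition~\ref{prp:f_inverse_order}, applied to a local auxiliary map~$F_i$ whose zero level set cuts out the graph of~$\gamma_i$. First I would fix $n_0\in N_0$ and set $m_0\equiv\gamma_0(n_0)$; by the uniqueness clause of Theorem~\ref{theorem:semiglobal-skew-critical-point} applied to the common skew critical problem obtained by restricting the data to~$h_M^{-1}(0)$, one has $\gamma_1|N_0=\gamma_2|N_0=\gamma_0$, giving the base case $\gamma_2=\gamma_1+O(h_N^1)$. I would then choose a vector bundle chart of~$TM$ near~$m_0$ in which the common subspace $(\kD_1)_{m_0}=(\kD_2)_{m_0}$ is the standard $\DD\oplus\sset0\subseteq\DD\oplus\FF$; near~$m_0$ each $\kD_i$ is then the graph of a smooth $L_i\colon U\to\hom(\DD,\FF)$ with $L_i(m_0)=0$, and the Grassmann-bundle identification of Remark~\ref{rmk:grassmann-subspace} converts $\kD_2=\kD_1+O(h_M^r)$ into $L_2=L_1+O(h_M^r)$.

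Next I would build the auxiliary map. With $\alpha^i=(\alpha^i_\DD,\alpha^i_\FF)$ in the induced chart on~$T^*M$, the skew critical condition $\alpha^i(m)|\kD_i(m)=0$ is equivalent to the vanishing of
\[
\phi_i(m)\equiv\alpha^i_\DD(m)+\alpha^i_\FF(m)\circ L_i(m)\in\DD^*,
\]
so I would define $F_i\colon U\to N\times\DD^*$ by $F_i(m)\equiv(g_i(m),\phi_i(m))$, whence $\gamma_i$ is characterized locally by $F_i(\gamma_i(n))=(n,0)$. At $m_0$ the derivative of~$F_i$ is precisely the map shown to be invertible in the proof of Lemma~\ref{lem:skew-critical-point-local}, so nondegeneracy makes each~$F_i$ a local diffeomorphism. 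The three hypotheses $\alpha^2=\alpha^1+O(h_M^r)$, $g_2=g_1+O(h_M^r)$, and $L_2=L_1+O(h_M^r)$ would then be combined through the composition formula~\eqref{eq:g_circ_f_res} applied termwise in the expression for~$\phi_i$ to yield $F_2=F_1+O(h_M^r)$.

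Having matched~$F_2$ and~$F_1$ to order~$r$, I would iterate as in the proof of Proposition~\ref{prp:f_inverse_order}. Suppose inductively that $\gamma_2=\gamma_1+O(h_N^l)$ for some $1\le l<r$; applying Proposition~\ref{prp:g_comp_f_order} to $F_i(\gamma_i(n))=(n,0)$ at $n$ near $n_0$, with $m\equiv\gamma_1(n)$, gives
\[
0=\dot\gamma_2(n)^l\,\res^l(F_2,F_1)(m)+T_mF_1\,\res^l(\gamma_2,\gamma_1)(n).
\]
The first residual vanishes because $l<r$, and since $T_mF_1$ is invertible by nondegeneracy, $\res^l(\gamma_2,\gamma_1)(n)=0$, i.e.\ $\gamma_2=\gamma_1+O(h_N^{l+1})$. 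Raising~$l$ to~$r$ gives the claim locally at~$n_0$, and because the residual is geometric, local verification at every $n_0\in N_0$ yields the global conclusion $\gamma_2=\gamma_1+O(h_N^r)$. The hard part will be the translation step $F_2=F_1+O(h_M^r)$: it is the only place the Grassmann-bundle interpretation of $\kD_2=\kD_1+O(h_M^r)$ from Definition~\ref{def:D2_D1_order} is used, and care is needed to verify that the chart-dependent construction of~$\phi_i$ actually produces an honest order statement in the sense of Definition~\ref{def:order}.
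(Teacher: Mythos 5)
Your proposal is essentially the paper's own proof: the same local setup with $\kD_i$ as a graph over $\DD$, the same auxiliary map $F_i=(\alpha^i\circ(\mathrm{incl}+\Delta_i),g_i)$ whose local invertibility comes from nondegeneracy, the same translation of the three order hypotheses into $F_2=F_1+O(h_M^r)$, and the same appeal to the inverse-contact result. The only cosmetic difference is that you unfold the induction of Proposition~\ref{prp:f_inverse_order} inline on $F_i\circ\gamma_i=(n,0)$, whereas the paper cites that proposition directly and reads off $\gamma_i(y)=F_i^{-1}(0,y)$.
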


\begin{proof}
It suffices to consider the local setup at $x=x_c$, where
\begin{enumerate}
\item $x_c\in U\subseteq\EE$ and $V\subseteq\FF$ are open
in Banach spaces $\EE$ and $\FF$, respectively, and $g(x_c)=y_c$;
\item the fiber of $\kD_i$ at $x$ is the graph
$\set{e+\Delta_i(x)e}{e\in\DD}$, where $\EE=\DD\oplus\DD^\perp$,
$\Delta_i(x)\colon\DD\to\DD^\perp$, and $\Delta_i(0)=0$;
\item $\alpha^i\colon U\to\EE^*$.
\end{enumerate}
In this setup, $x=\gamma_i(y)$ are determined by the equations
$F_i(x)=(0,y)$ such that $F_i\colon U\to\DD^*\times V$ is
defined by
\[
F_i(x)\equiv\bigl(\alpha_{\Delta_i}(x),g_i(x)\bigr),\quad
  \alpha_{\Delta_i}\equiv\alpha^i(x)\circ\bigl(\iota_{\DD\EE}
  +\Delta_i(x)\bigr),
\]
where $\iota_{\DD\EE}$ is the inclusion of $\DD$ into $\EE$.  The
domain of the $F_i$ has the local representative $h_U$ of $h_M$, and
the codomain of the $F_i$ has the function $h_{\DD^*\times
V}(\alpha,y)\equiv h_V(y)$ where $h_V$ locally represents $h_N$.
$F_2=F_1+O(h_U^r)$ since $\alpha^1=\alpha^2+O(h_U^r)$ and
$\Delta_1=\Delta_2+O(h_U^r)$, and since composition of linear maps is
continuous and bilinear. Also, $x_c$ is a nondegenerate skew critical
point for both problems corresponding to $i=1,2$, so each $F_i$ is a
local diffeomorphism at $x_c$. From
Proposition~\ref{prp:f_inverse_order}, and near $(0,y_c)$,
$F_1^{-1}=F_2^{-1}+O(h_{\DD^*\times V}^r)$, which from
$\gamma_i(y)=F_i^{-1}(0,y)$ implies $\gamma_1=\gamma_2+O(h_V^r)$, as
required.
\end{proof}

In the context of Theorem~\ref{thm:gamma_order}, we will need to know
that the residuals of the solutions $\gamma_i$ depend only on the
residuals of $\alpha^i$, $\kD_i$, and $g_i$.  For this, it suffices to
show that, given a skew critical point $m_c\in M_c$ at $n_c\in N_0$,
$u_c=\res^r(\gamma_2,\gamma_1)(n_c)$ is the unique solution of
$F(u)=0$ subject to the constraint $G(u)=0$, where $F\colon
T_{m_c}M\rightarrow\kD_{m_c}^*$ is defined by
\[[eq:residual_F]
F(u)&=\dot\gamma_2(n_c)d_{\kD}\alpha^1(m_c)(u)\\
&\qquad\mbox{}+\res^r(\alpha^2,\alpha^1)(m_c)
  +\bar\alpha^1(m_c)\circ\res^r(\kD_2,\kD_1)(m_c)
\]
and $G\colon T_{m_c}M\rightarrow T_{n_c}N$ is defined by
\[[eq:residual_G]
G(u)=Tg_1(m_c)u+\res^r(g_2,g_1)(m_c).
\]
Here $\alpha^1(m_c)$ annihilates $(\kD_1)_{m_c}$ and so descends to
$\bar\alpha^1(m_c)$ in the quotient $T_{m_c}M/(\kD_1)_{m_c}$. If
$r\ge2$ then the index~$1$ occurring asymmetrically
in~\eqref{eq:residual_F} and~\eqref{eq:residual_G}, such as in the
fragment $d_{\kD}\alpha^1(m_c)$, can be replaced by the index~$2$
because the data of the skew critical problems are assumed to match to
order.  To show~\eqref{eq:residual_F} and~\eqref{eq:residual_G}, note
that, in the local setup, a vector field in $\kD_i$ extending any
$e\in\DD$ is available as $x\mapsto e+\Delta_i(x)e$, and so
\[
d_{\kD_1}\alpha^i(x_c)(u,e)&=\left.\frac{d}{dt}\right|_{t=0}
  \langle\alpha^i(x_c+ut),e+\Delta_i(x_c+ut)e\rangle
=\langle D\alpha_{\Delta_i}(x_c)u,e\rangle.
\]
Since $\alpha_{\Delta_i}\bigl(\gamma_i(y)\bigr)=0$, the residuals of
this for $i=1,2$ are zero, so
\[
0&=\dot\gamma_2(y_c)\res^r(\alpha_{\Delta_2},\alpha_{\Delta_1})(x_c)
  +D\alpha_{\Delta_1}(x_c)\res^r(\gamma_2,\gamma_1)(y_c)\\
&=\dot\gamma_2(y_c)\res^r(\alpha^2,\alpha^1)(x_c)
  +\alpha^1(x_c)\circ\res^r(\Delta_2,\Delta_1)(x_c)\\
&\qquad\qquad\mbox{}+d_{\kD_1}\alpha^1(x_c)^\flat\res^r(\gamma_2,\gamma_1)(y_c),
\]
which is the local version of~\eqref{eq:residual_F}. The
constraint~\eqref{eq:residual_G} follows from the
equation~$\res^r(g_2\circ\gamma_2,g_1\circ\gamma_1)(n_c)=0$, since
$g_1\circ\gamma_1(n)=g_2\circ\gamma_2(n)=n$.

Suppose one has skew critical problems as in
Theorem~\ref{thm:gamma_order}, where the \emph{unperturbed}~problem is
equivariant under the action of a Lie group.  For the application we
have in mind, $\kG$~is \emph{not}~a~symmetry group of the full
critical  problem: only~the~\emph{residuals} of the unperturbed
problem are equivariant. Then
Proposition~\ref{prp-equivariance-critical-point-residual} below shows
that the residuals of the solutions are equivariant.  Recall that, if
a Lie group acts on a manifold~$M$, then it acts by lifts on~$TM$
and~$T^*M$, and also in the obvious way on the vertical bundles
of~$TM$ and~$T^*M$, and on any Grassmann bundle of~$TM$.

\begin{proposition}\label{prp-equivariance-critical-point-residual}
Let $(M,h_M)$ and~$(N,h_N)$ be manifolds suppose $\alpha^i$, $g_i$,
$\gamma_i$, and $\kD_i$,~$i=1,2$ are as in
Theorem~\ref{thm:gamma_order}. Suppose that a Lie group~$\kG$ acts
on~$M$ and~$N$, and
\begin{enumerate}
\item $g_i\colon(M,h_M)\rightarrow(N,h_N)$ i.e.\ $h_N\circ g_i=h_M$;
\item 
$\kD_i|M_0$ are tangent to $h_M^{-1}(0)$ and are invariant,
$\alpha^i|T\bigl(h_M^{-1}(0)\bigr)$ are invariant, and $g_i|h_M^{-1}(0)$ are
equivariant;
\item 
$\res^r(\alpha^2,\alpha^1)$, $\res^r(g_2,g_1)$, and
$\res^r(\gamma_2,\gamma_1)$ are equivariant.
\end{enumerate}
Then $h_M\circ\gamma_i=h_N$, and $\res(\gamma_2,\gamma_1)\colon
h_N^{-1}(0)\rightarrow T\bigl(h_M^{-1}(0)\bigr)$ is equivariant.
\end{proposition}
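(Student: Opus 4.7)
The plan is to prove the conclusion in three stages: first verify the compatibility $h_M\circ\gamma_i=h_N$; then use it to show that $\res^r(\gamma_2,\gamma_1)$ takes values in $T\bigl(h_M^{-1}(0)\bigr)$; and finally obtain equivariance via the characterization of the residual by Equations~\eqref{eq:residual_F} and~\eqref{eq:residual_G}.

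For the first stage, since $g_i\circ\gamma_i$ is the identity on the domain of~$\gamma_i$ and hypothesis~(a) gives $h_N\circ g_i=h_M$, one composes to obtain $h_M\circ\gamma_i=h_N\circ g_i\circ\gamma_i=h_N$, so $\gamma_i$ carries $h_N^{-1}(0)$ into $h_M^{-1}(0)$. For the second stage, the equality $h_M\circ\gamma_2=h_M\circ\gamma_1$ and the remark following Proposition~\ref{prp:g_comp_f_order} (common outer map) yield
\[
0=\res^r\bigl(h_M\circ\gamma_2,\,h_M\circ\gamma_1\bigr)(n)
 =T_{\gamma_i(n)}h_M\cdot\res^r(\gamma_2,\gamma_1)(n)
\]
for all $n\in h_N^{-1}(0)$, exhibiting $\res^r(\gamma_2,\gamma_1)(n)\in\ker T_{\gamma_i(n)}h_M=T_{\gamma_i(n)}\bigl(h_M^{-1}(0)\bigr)$.

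For the third stage, fix $n_c\in h_N^{-1}(0)$ and $m_c\equiv\gamma_i(n_c)\in M_0$. By the derivation preceding the proposition, $u_c\equiv\res^r(\gamma_2,\gamma_1)(n_c)$ is the unique $u\in T_{m_c}M$ solving $F(u)=0$ and $G(u)=0$ as in~\eqref{eq:residual_F}--\eqref{eq:residual_G}; by the second stage, $u_c$ actually lies in $T_{m_c}\bigl(h_M^{-1}(0)\bigr)$. For $\phi\in\kG$, I will check that each ingredient of the system transforms equivariantly, so that its $\phi$-image is the analogous system at $\phi\cdot n_c$; uniqueness then forces $\phi\cdot u_c=\res^r(\gamma_2,\gamma_1)(\phi\cdot n_c)$. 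Specifically, $\dot\gamma_2\equiv1$ by the first stage and so is trivially equivariant; $Tg_1(m_c)$ is equivariant because $g_i|h_M^{-1}(0)$ is equivariant by~(b); the terms $\res^r(\alpha^2,\alpha^1)$, $\res^r(\kD_2,\kD_1)$, and $\res^r(g_2,g_1)$ are equivariant by~(c) (reading the middle entry in place of what appears to be a typo for $\res^r(\gamma_2,\gamma_1)$); and $d_\kD\alpha^1(m_c)(u,v)$, for $u\in T_{m_c}\bigl(h_M^{-1}(0)\bigr)$ and $v\in(\kD_1)_{m_c}$, is a derivative of $\langle\alpha^1,V\rangle$ along a curve in $h_M^{-1}(0)$ for any $\kD_1$-valued extension~$V$ of~$v$ tangent to $h_M^{-1}(0)$ (available by~(b)), and hence is equivariant by the invariance of $\alpha^1$ and $\kD_1$ along $h_M^{-1}(0)$.

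The main obstacle is justifying that $d_\kD\alpha^1(m_c)$ and $\bar\alpha^1(m_c)\circ\res^r(\kD_2,\kD_1)(m_c)$ transform equivariantly, since $\alpha^1$ and $\kD_1$ are only assumed invariant along $h_M^{-1}(0)$ rather than globally. This is resolved by the observation, supplied by the second stage and by the tangency clause in~(b), that every vector actually paired against $d_\kD\alpha^1(m_c)$ or $\bar\alpha^1(m_c)$ in the system can be represented by a vector tangent to~$h_M^{-1}(0)$; in particular the image of $\res^r(\kD_2,\kD_1)(m_c)$ in the quotient $T_{m_c}M/(\kD_1)_{m_c}$ inherits tangency from the fact that $(\kD_1)_{m_c},(\kD_2)_{m_c}\subseteq T_{m_c}\bigl(h_M^{-1}(0)\bigr)$. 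Thus all relevant pairings reduce to data intrinsic to $h_M^{-1}(0)$, where the hypotheses provide the needed invariance.
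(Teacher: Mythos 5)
Your proof is correct and rests on the same key fact as the paper's: the characterization of $u_c=\res^r(\gamma_2,\gamma_1)(n_c)$ as the unique solution of the affine system \eqref{eq:residual_F}--\eqref{eq:residual_G}. The difference is in how the equivariance is extracted from that characterization. The paper transports the entire problem by a fixed $\tau\in\kG$, setting $\tilde\alpha^i=\tau^*\alpha^i$, $\tilde\kD_i=\tau\kD_i$, $\tilde g_i=\tau^*g_i$, observes that $\tau\gamma_i$ solves the transported problem, and concludes that the two solution residuals coincide because the data residuals do; you instead verify term by term that the system at $\tau m_c$ is the $\tau$-image of the system at $m_c$ and invoke uniqueness. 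These are the same argument in substance, but your version makes explicit two points the paper's proof elides: first, that $\res^r(\gamma_2,\gamma_1)$ actually takes values in $T\bigl(h_M^{-1}(0)\bigr)$ (via $h_M\circ\gamma_2=h_M\circ\gamma_1$ and the common-outer-map case of Proposition~\ref{prp:g_comp_f_order}), which is part of the stated conclusion but is not proved in the paper; and second, that the operators $d_\kD\alpha^1(m_c)$, $\bar\alpha^1(m_c)$, and $Tg_1(m_c)$ transform correctly even though invariance of $\alpha^i$, $\kD_i$, $g_i$ is only assumed along $h_M^{-1}(0)$ -- your observation that every vector fed into these operators is tangent to $h_M^{-1}(0)$ is exactly what makes the paper's ``the residuals are determined by the residuals of the data'' step legitimate. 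Your reading of the third hypothesis as a typo for $\res^r(\kD_2,\kD_1)$ is also the right one, since as printed the hypothesis would assume the conclusion and leave the distribution residual uncontrolled.
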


\begin{proof}
Since $g_i\circ\gamma_i(n)=n$,
\[h_N(n)=h_N\bigl(g_i\circ\gamma_i(n)\bigr)=h_M\circ\gamma_i(n)
\]
and $h_M\circ\gamma_i=h_N$ follows.  Fix~$\tau\in\kG$ and let
$\tilde\alpha^i=\tau^*\alpha$, $\tilde\kD_i=\tau\kD_i$, and $\tilde
g_i=\tau^*g_i$, where $\tau^*$ denotes pull-back by $m\mapsto\tau m$.
Note that the maps $\tilde\gamma_i\equiv\tau\gamma_i$ give the skew
critical points of $(\tilde\alpha^i,\tilde\kD_i,\tilde g_i)$.
By~\eqref{eq:residual_F} and~\eqref{eq:residual_G}, the residuals
$\res^r(\gamma_2,\gamma_1)$ and
$\res^r(\tilde\gamma_2,\tilde\gamma_1)$ are determined by the
residuals of the data in the skew problems $(\alpha^i,\kD_i,g_i)$ and
$(\tilde\alpha^i,\tilde\kD_i,\tilde g_i)$, respectively. So the
assumed equivariance of the data residuals implies that the residuals
$\res^r(\gamma_2,\gamma_2)$
and~$\res^r(\tilde\gamma_2,\tilde\gamma_1)$ are equal, and
\begin{equation*}
\tau\res^r(\gamma_2,\gamma_1)(n)
=\res^r(\tau\gamma_2,\tau\gamma_1)(\tau n)\\
=\res^r(\tilde\gamma_2,\tilde\gamma_1)(\tau n)\\
=\res^r(\gamma_2,\gamma_1)(\tau n),
\end{equation*}
as required.
\end{proof}
\bibliographystyle{plain}

\end{document}